\title{Tiling problems on Baumslag-Solitar groups.}
\author{Nathalie Aubrun
\institute{LIP, ENS de Lyon, CNRS, INRIA, UCBL, Universit\'e de Lyon }
\email{nathalie.aubrun@ens-lyon.fr}
\and
Jarkko Kari
\institute{Department of Mathematics and Statistics, University of Turku}
\email{jkari@utu.fi}
}
\newskip\@bigflushglue \@bigflushglue = -100pt plus 1fil
\def\bigcentering{\let\\\@centercr\rightskip\@bigflushglue%
\leftskip\@bigflushglue
\parindent\z@\parfillskip\z@skip}
\newcommand{\G}{G}
\newcommand{\R}{\mathbbm{R}}
\newcommand{\Z}{\mathbbm{Z}}
\newcommand{\N}{\mathbbm{N}}
\renewcommand{\L}{\mathcal{L}}
\newcommand{\Gen}{\mathcal{G}}
\newcommand{\Rel}{\mathcal{R}}
\newcommand{\Per}{Per}
\newcommand{\BS}[2]{\mathbf{BS}(#1,#2)}
\newcommand{\floor}[1]{\lfloor #1\rfloor}
\newcommand{\bigfloor}[1]{\left\lfloor #1\right\rfloor}
\newcommand{\define}{\emph}
\theoremstyle{plain}
\newtheorem{theorem}{Theorem}
\newtheorem{proposition}[theorem]{Proposition}
\newtheorem{lemma}[theorem]{Lemma}
\theoremstyle{definition}
\theoremstyle{remark}
\newtheorem{example}{Example}[section]
\begin{document}
\maketitle

\begin{abstract}
We exhibit a weakly aperiodic tile set for Baumslag-Solitar groups, and prove that the domino problem is undecidable on these groups. A consequence of our construction is the existence of an arecursive tile set on Baumslag-Solitar groups.
\end{abstract}

\section*{Introduction}

Wang tilings are colorings of the Euclidean plane that respect some local constraints. They can be viewed both as dynamical systems and computational models, but they were first introduced by Hao Wang to study decision problems on some classes of logical formulas~\cite{Wang1961}.

The concept of Wang tiles may be generalized to define tilings on other regular structures than $\Z^2$, for instance on the Cayley graph of a finitely generated group. The domino problem -- deciding whether a finite tile set can tile the group or not -- is undecidable on $\Z^2$~\cite{Berger1966}. A more recent result~\cite{MullerSchupp1985,KuskeLohrey2005} characterizes groups having a decidable monadic second order logic. A consequence of this result states that if the group is virtually free, then it has a decidable domino problem. The reciprocal statement is still an open question, and very few examples of effective constructions to prove the (un-)decidability of the domino problem on finitely presented groups are known~\cite{DBLP:conf/mcu/Kari07,Margenstern2008}. Obviously the domino problem cannot be decidable on groups with undecidable word problem, so it is relevant to investigate non virtually free groups with decidable word problem. A famous theorem by Magnus~\cite{Magnus1932} states that provided a 
finitely presented group possesses a one-relator presentation, it has decidable word problem.

In this article we investigate tilings on Baumslag-Solitar groups, that are two generators and one-relator non virtually free groups. This paper is organized as follows. Section~\ref{section.tilings} is devoted to the definition of tilings on finitely presented groups, which relies on the Cayley graph representation of such a group. In Section~\ref{section.BS_groups} we present the class of Baumslag-Solitar groups, and define Wang tiles on these groups. Following the ideas in~\cite{DBLP:journals/dm/Kari96} and \cite{DBLP:conf/mcu/Kari07}, we exhibit a weakly aperiodic tile set (see Section~\ref{section.aperiodic}), and we prove that the domino problem is undecidable on these groups (see
Section~\ref{section.dp_undecidable}).
Analogously to~\cite{Jeandel2012}, we conclude that there exist arecursive tile sets on the Baumslag-Solitar groups, that is, tile sets that admit valid tilings but all valid tilings are non-recursive.

\section{Tilings on finitely presented groups}
\label{section.tilings}

This section is devoted to definitions necessary to formalize a generalization of Wang tiles to a particular class of regular graphs, that is the class of Cayley graphs of finitely presented groups.

\subsection{Finitely presented groups and graph representation}

The group $\G$ is \define{finitely generated} if it has a presentation $P=<\Gen|\Rel>$ where $\Gen$ is finite, and \define{finitely presented} if it has a presentation $P=<\Gen|\Rel>$ where both $\Gen$ and $\Rel$ are finite (see~\cite{LyndonSchupp1977} for an introduction to group presentations). By abuse of notation, we will write $G=<\Gen|\Rel>$ for saying that the group $G$ has presentation $<\Gen|\Rel>$. The set $\Gen$ is the set of \define{generators} of the group, and the set $\Rel$ is the set of \define{relators} of the group and is made of words on the alphabet $\Gen\cup\Gen^{-1}$ that represent the identity of the group. We adopt here the conventions that if the set of generators contains an element $g\in\G$, then it does not contain its inverse $g^{-1}$, and that the set of relators $\Rel$ does not contains trivial relators $xx^{-1}=\varepsilon$, even if they always hold. We also denote by $\Gen^{-1}$ the set of inverses of generators, and by $\varepsilon$ the identity of $\G$. A group has 
infinitely many presentations, and determining whether two presentations define two isomorphic groups is an undecidable problem~\cite{Rabin1958}.

The \define{Cayley graph} of a finitely presented group $\G$, denoted by $\Gamma_\G=(V_\Gamma,E_\Gamma)$, is a graph whose set of vertices $V_\Gamma$ is identified with $\G$ and whose set of edges $E_\Gamma$ consists of pairs of the form $(h,hg)$, where $h$ is an element of $\G$ and $g$ is a generator or an inverse of generator. Note that the Cayley graph depends on the choice of the generators.

\subsection{Tilings and periodicity}
\label{subsection.tilings_periodicity}

Let $G$ be a group, and let $\mathcal{A}\subseteq G$ be a finite set with $n$ elements. Given a finite set of colors $\mathcal{C}$, a \define{Wang tile on $\mathcal{A}$} is mapping $t:\mathcal{A}\times\{+,-\}\rightarrow\mathcal{C}$, where $(g,+)$ is the color on the outgoing edge $g$ and $(g,-)$ is the color on the incoming edge $g$. A Wang tile can be represented by a polygon with $2n$ edges, and every edge is associated with a color. Let $\tau$ be a tile set on $G$. A \define{valid tiling of $G$ by $\tau$} is a map $t:G\rightarrow\tau$ that satisfies the following condition
$$\forall x\in G,\forall g\in\mathcal{A}\ \ t(x)(g,+)=t(xg)(g,-),$$ 
that is to say that two adjacent Wang tiles have the same color on their common edge. A tiling can thus be seen
as a coloring of the edges of the Cayley graph of $G$ that satisfies given local constraints at each vertex.

Let $\tau$ be a finite tile set. Then the group $\G$ acts on the set of tilings $\tau^\G$ by translation. Given $m\in\G$, we define $\mathfrak{S}_m$ the \define{translation by $m$} by
$$\mathfrak{S}_m:\left(
\begin{array}{rcl}
\tau^\G & \rightarrow & \tau^\G \\
 x & \mapsto & \mathfrak{S}_m(x)
\end{array}
\right),$$
where $\left[\mathfrak{S}_m(x)\right]_g=x_{mg}$ for any $g\in\G$. Let $x\in\tau^\G$ be a valid tiling of $\G$ by $\tau$. Given an element $m\in\G$, the tiling $x$ is said to be \define{$m$-periodic} if $x=\mathfrak{S}_m(x)$. In that case we say that $m$ is a \define{period} for $x$. The set of periods of a tiling $x\in\tau^\G$, denoted by $\Per(x)$, is thus a sub-group of $\G$.

Following Goodman-Strauss~\cite{GoodmanStrauss2000} we define two notions of aperiodicity. A tiling $x$ is \define{weakly periodic} if $\Per(x)$ contains an infinite cyclic subgroup, and $x$ is \define{strongly non-periodic} if it is not weakly periodic. A tile set is \define{strongly aperiodic} if a valid tiling exists and if it admits only strongly non-periodic tilings. A tiling $x$ is \define{strongly periodic} if $\Per(x)$ is a finite index subgroup of $G$. A tiling $x$ is \define{weakly non-periodic} if it is not strongly periodic. A tile set is \define{weakly aperiodic} if a valid tiling exists and if it admits only weakly non-periodic tilings.

It is clear that if $G$ contains an element of infinite order, the strong concepts imply the corresponding weak concepts. It is also well known that in $G=\Z$ the strong and weak concepts are equivalent, and in $G=\Z^2$ a tile set is strongly aperiodic if and only if it is weakly so. In contrast, already in $G=\Z^3$ there are tile sets that are weakly aperiodic but not strongly aperiodic. In the following we see the same phenomenon on the class of Baumslag-Solitar groups, that will be defined in Section~\ref{subsection.def_BS}.

\subsection{The domino problem}
\label{subsection.dp}

Let $G$ be a finitely generated group, and suppose a finite set of generators $\Gen$ is fixed. Is it possible to find an algorithm that takes as input a finite set of Wang tiles $\tau$ on $\Gen$, and outputs $\mathbf{Yes}$ if and only if there exists a valid tiling by $\tau$ ? This problem is known as the \define{domino problem} on group $G$. The decidability status
of this problem does not depend on the choice of the generator set $\Gen$ by the following theorem.

\begin{theorem}\label{theorem.deci_presentation}
Let $G$ be a group, and let $\mathcal A$ and $\mathcal B$ be two finite subsets of $G$. Suppose that $\mathcal B$ is contained in the subgroup generated by $\mathcal A$. Then, if it is decidable whether a given Wang tile set on $\mathcal A$ admits a tiling, then the analogous question on $\mathcal B$ is also decidable.
\end{theorem}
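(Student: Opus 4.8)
The plan is to reduce the domino problem on $\mathcal{B}$ to the domino problem on $\mathcal{A}$: from any tile set $\sigma$ on $\mathcal{B}$ I will build, by a computable transformation, a tile set $\tau$ on $\mathcal{A}$ such that $\sigma$ admits a valid tiling of $G$ if and only if $\tau$ does. Running the assumed decision algorithm for $\mathcal{A}$ on $\tau$ then settles the question for $\sigma$. Since $\mathcal{A}$ and $\mathcal{B}$ are fixed and the hypothesis $\mathcal{B}\subseteq\langle\mathcal{A}\rangle$ guarantees that each $b\in\mathcal{B}$ equals some word over $\mathcal{A}\cup\mathcal{A}^{-1}$, I fix once and for all, for every $b\in\mathcal{B}$, such a word $b=g_{b,1}g_{b,2}\cdots g_{b,\ell_b}$ with each $g_{b,i}\in\mathcal{A}\cup\mathcal{A}^{-1}$. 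These finitely many words are hard-coded data of the reduction; only their existence is used, so no decision procedure (in particular no solution of the word problem of $G$) is needed to produce them.

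The construction of $\tau$ is the classical idea of transmitting a colour along the path spelling $b$. A tile of $\tau$ records a base tile $s\in\sigma$ together with, for every pair $(b,j)$ with $b\in\mathcal{B}$ and $0\le j\le\ell_b$, an auxiliary colour $c_{b,j}\in\mathcal{C}$, to be read as the colour of a $b$-signal passing through the current vertex at stage $j$. These auxiliary colours are exposed on the edges in the directions of $\mathcal{A}$: for $a\in\mathcal{A}$ the colour $t(\cdot)(a,+)$ records the components $c_{b,j}$ for which $g_{b,j+1}=a$ together with the components $c_{b,j+1}$ for which $g_{b,j+1}=a^{-1}$, while $t(\cdot)(a,-)$ records the corresponding partner components. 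They are placed so that, as $a$ and the vertex range over all choices, the matching condition $t(x)(a,+)=t(xa)(a,-)$ becomes equivalent to the propagation law $c_{b,j}(x)=c_{b,j+1}(x\,g_{b,j+1})$ for all $x\in G$, $b\in\mathcal{B}$ and $0\le j<\ell_b$. Finally I keep in $\tau$ only those tiles obeying the two local constraints $c_{b,0}=s(b,+)$ and $c_{b,\ell_b}=s(b,-)$. Because each pair $(b,j)$ owns a separate coordinate there is no clash even when a generator occurs several times, or with both signs, inside a chosen word, and $\tau$ is finite, being indexed by $\sigma$ and by finitely many copies of $\mathcal{C}$.

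It then remains to verify the two implications. Given a valid tiling of $G$ by $\tau$, the propagation law shows that $c_{b,j}(x\,g_{b,1}\cdots g_{b,j})$ does not depend on $j$; reading off the two local constraints at its endpoints $x$ (stage $0$) and $xb$ (stage $\ell_b$) yields $s(x)(b,+)=s(xb)(b,-)$, so the projection to base tiles is a valid tiling by $\sigma$. Conversely, from a valid tiling $s\colon G\to\sigma$ I recover auxiliary colours by setting $c_{b,j}(y)=s\big(y\,(g_{b,1}\cdots g_{b,j})^{-1}\big)(b,+)$, the outgoing $b$-colour at the origin of the signal reaching $y$ at stage $j$: the propagation law then holds by construction, the initial constraint is immediate, and the final constraint is precisely the validity of $s$ read at $yb^{-1}$. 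This produces a valid tiling by $\tau$, closing the equivalence.

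I expect the genuinely delicate point to be bookkeeping rather than conceptual: one must check that the edge colour in direction $a$ simultaneously carries every step $g_{b,j+1}\in\{a,a^{-1}\}$ with the correct orientation, which is exactly where the asymmetry between the outgoing colour $(a,+)$ and the incoming colour $(a,-)$ is used, and that the auxiliary alphabet stays finite. Both are ensured by the indexing over the fixed finite family of pairs $(b,j)$, so the transformation $\sigma\mapsto\tau$ is effective and the reduction is complete.
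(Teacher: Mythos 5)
Your proposal is correct, and I verified the delicate point you flag: giving each pair $(b,j)$ its own coordinate, with $c_{b,j}$ placed on the $(a,+)$-side and $c_{b,j+1}$ on the $(a,-)$-side when $g_{b,j+1}=a$, and the reverse placement when $g_{b,j+1}=a^{-1}$, does make the Wang matching conditions along $\mathcal{A}$-edges exactly equivalent to the propagation law $c_{b,j}(x)=c_{b,j+1}(x\,g_{b,j+1})$; your telescoping argument and the two endpoint constraints then give both implications, and the degenerate cases (a letter occurring with both signs, an element of $\mathcal{B}\cap\mathcal{A}$, or $b=\varepsilon$ with $\ell_b=0$, where the constraints collapse to $s(b,+)=s(b,-)$ inside each tile, as they should) all go through. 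However, your construction is genuinely different from the paper's. The paper performs a higher-block recoding, in the style of higher block shifts from symbolic dynamics: it fixes a finite window $N\subseteq G$ containing $\varepsilon$, every element of $\mathcal{B}$, and all suffixes of the chosen $\mathcal{A}$-words; the new tiles are the \emph{correctly tiled} patterns $p\in\sigma^N$, and the color of an $a$-edge ($a\in\mathcal{A}$) is the restriction of $p$ to the overlap $N\cap aN$ (suitably shifted on the $(a,-)$-side), so that adjacent windows are forced to agree on their overlap; the counterpart of your telescoping step is an induction along the $\mathcal{A}$-path proving $t'(g)(b)=t'(gb)(\varepsilon)$ for each $b\in\mathcal{B}$, which is precisely why $N$ must contain the suffixes, after which the $\sigma$-constraint is read off inside a single window containing both $\varepsilon$ and $b$. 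So where the paper transports whole blocks of base tiles and enforces the old constraints within each window, you keep one base tile per vertex and transport only the $b$-edge colors along dedicated signal channels, enforcing the old constraint at the two endpoints of each path. Your version buys a much smaller tile set (roughly $|\sigma|\cdot|\mathcal{C}|^{\sum_b(\ell_b+1)}$ rather than up to $|\sigma|^{|N|}$) and a verification that decomposes into independent channels, at the price of the orientation bookkeeping you acknowledge; the paper's version needs no such bookkeeping, since copying whole overlapping patterns handles transport and constraint-enforcement by one uniform mechanism. Both reductions share the essential ingredients — hard-coded $\mathcal{A}$-words for the elements of $\mathcal{B}$ with no appeal to the word problem, as you rightly stress, and an induction along the path — and both in fact set up bijections between the valid tilings of the two tile sets (in your construction the channels are determined by the base tiling via the propagation law), so either is fully adequate for the decidability statement.
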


It follows that the choice of the generators is irrelevant for the decidability of the domino problem of a finitely generated group. It also follows from the previous theorem that if a finitely generated group $G$ is a subgroup of a finitely generated group with decidable domino problem, then the domino problem on $G$ is also decidable.


\begin{theorem}\label{theorem.dp_undeci_Z2}
 The domino problem is undecidable on $\Z^2$.
\end{theorem}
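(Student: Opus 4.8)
The plan is to reduce the halting problem for Turing machines to the domino problem on $\Z^2$. Concretely, from a Turing machine $M$ I would build, in an effective way, a finite Wang tile set $\tau_M$ over the standard generators of $\Z^2$ with the property that $\tau_M$ admits a valid tiling of $\Z^2$ if and only if $M$ fails to halt on the empty input. Since the latter property is undecidable, no algorithm can decide the domino problem, which is exactly the statement. The orientation convention I would adopt is that the horizontal direction of the plane carries the tape of $M$ and the vertical direction carries time, so that each horizontal row encodes one configuration of $M$ and the row immediately above it encodes the configuration one computation step later.

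First I would design the local tiling rules that force each row to encode a legal configuration and force consecutive rows to be related by a single transition of $M$. This uses the familiar families of tiles: \emph{tape tiles} carrying a tape symbol on their horizontal edges and propagating it unchanged upward when the head is absent; \emph{head tiles} carrying the current state together with the scanned symbol; and \emph{transition tiles} that implement each instruction of $M$ by rewriting the scanned cell, updating the state, and moving the head left or right, all enforced by matching colors with the neighboring tiles. The matching conditions on shared edges guarantee that exactly one head appears per row and that the one-step relation is simulated faithfully. Crucially, I would include no tile corresponding to a halting instruction, so that a halting configuration can never be completed into a valid tiling.

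The main obstacle is that a tiling of $\Z^2$ is bi-infinite in both directions, so nothing local forces the computation to begin from a blank initial configuration, nor does it pin down a starting time; a naive construction would be tiled trivially by a constant blank row and would therefore be useless. To overcome this I would superimpose the computation on an aperiodic hierarchical scaffolding, in the spirit of the Robinson and Berger constructions, which forces the plane to be partitioned into square regions of unbounded size. Each such region is reset to a blank tape along its bottom edge and is required to run the simulation of $M$; if $M$ halts after some number $n$ of steps, then every region of side length exceeding $n$ is forced to reach a halting configuration, which is impossible since no halting tile exists, so $\tau_M$ admits no valid tiling. Conversely, if $M$ never halts on the empty input, one obtains a valid tiling by placing the aperiodic scaffolding and filling every region with the corresponding initial segment of the infinite computation of $M$. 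Verifying that the scaffolding correctly seeds a fresh blank computation in each region, and that both directions of the equivalence hold, is the technical heart of the argument and is precisely the content of Berger's theorem~\cite{Berger1966}.
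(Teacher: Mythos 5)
Your proposal is correct and matches the paper's treatment: the paper does not prove this theorem itself but attributes it to Berger~\cite{Berger1966}, whose argument is exactly the Turing-machine simulation superimposed on a hierarchical (strongly aperiodic) scaffolding that you sketch. Your reduction of the non-halting problem, including the key observation that bi-infiniteness must be tamed by forcing freshly seeded computations in unboundedly large square regions, is the standard Berger--Robinson route and is the same approach the paper relies on by citation.
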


The original proof of this result due to Berger~\cite{Berger1966}, for the usual presentation $\Z^2=<a,b|ab=ba>$, relies on the
construction of a strongly aperiodic tile set. In this paper we adopt the method introduced in~\cite{DBLP:conf/mcu/Kari07} to prove the undecidability of the domino problem on Baumslag-Solitar groups.


\subsection{Tilings defined by allowed patterns}
\label{subsection.sft}

It is well known that tilings as they were introduced in Section~\ref{subsection.tilings_periodicity} are the same objects as
subshifts of finite type (SFT), studied in symbolic dynamics. Let $A$ be a finite alphabet. An SFT on group $G$ is determined by a finite set of allowed patterns $p\in A^{\mathcal D}$ on a finite subset $\mathcal D$ of $G$. A coloring $x\in A^G$ of $G$ is in the corresponding SFT if all its translates have an allowed pattern in domain $\mathcal D$.

One may as well consider colorings of the edges of the Cayley graph of a finitely generated group. In the following discussion we adopt this convenient way to describe tile sets on Baumslag-Solitar groups. By abuse of language we will use the term ``tile'' for a finite allowed pattern, and the term ``tile set'' for a finite set of allowed patterns. It is a simple matter to convert such tile sets into equivalent Wang tile sets.

\section{The Baumslag-Solitar groups}
\label{section.BS_groups}

We now focus on a particular class of groups, the Baumslag-Solitar groups, that were initially introduced in~\cite{BaumslagSolitar1962} as examples of non-Hopfian finitely presented groups. These groups are very simple since they have a presentation with only two generators and one relator, and thus have decidable word problem~\cite{Magnus1932}. Nevertheless their structure is not trivial, as we will see in the sequel.

\subsection{Definition and graph representation}
\label{subsection.def_BS}

Given two non-zero integers $m$ and $n$, we define $\BS{m}{n}$ the \define{Baumslag-Solitar group of order $(m,n)$} as the
two-generators and one-relator group with presentation

$$\BS{m}{n}=<a,b | a^mb=ba^n>.$$

In particular $\BS{1}{1}$ is isomorphic to $\Z^2$. It is known~\cite{Meskin1972} that the group $\BS{m}{n}$ is residually
finite if and only if $|m|=1$ or $|n|=1$ or $|m|=|n|$. Since a finitely presented virtually free group is always residually finite (as a consequence of~\cite{KarrassPietrowskiSolitar1973}), the above property means that most of the Baumslag-Solitar groups do not satisfy the hypothesis of the Muller and Schupp's theorem~\cite{MullerSchupp1985}, hence we do not know anything about the domino problem on these groups.

Since $\BS{-m}{-n}$ is isomorphic to $\BS{m}{n}$, it is enough to consider groups with $m>0$. For simplicity, we also assume that $n>0$. The case $n<0$ is analogous.

We now discuss $\Gamma_{m,n}$, the Cayley graph of $\BS{m}{n}$ for $m,n>0$. Since $\BS{m}{n}$ has two generators, every vertex in its Cayley graph $\Gamma_{m,n}$ has in-degree and out-degree $2$.

The \define{level associated with $g$} of the Cayley graph $\Gamma_{m,n}$ is the induced subgraph obtained by keeping only the
vertices of the coset $g\langle a\rangle = \left\{g.a^k:k\in\Z \right\}$. We denote it by $\L_g$ and we say that the vertex $g$
\define{defines} the level $\L_g$.
The level $L_{gb}$ is a \define{predecessor} of the level $L_g$, while the latter is a \define{successor} of the former, for all group elements $g$. Note that each level has $m$ predecessors and $n$ successors.

Our tilings will be colorings of the edges of the Cayley graph $\Gamma_{m,n}$. The local constraint is given in terms of a set of allowed patterns on the edges $$\varepsilon\longrightarrow a \longrightarrow a^2\longrightarrow\dots \longrightarrow a^m
\longrightarrow a^mb=ba^n$$ and $$\varepsilon\longrightarrow b \longrightarrow ba \longrightarrow ba^2\longrightarrow\dots
\longrightarrow ba^n=a^mb,$$ see the left side of Figure~\ref{figure.basic_pattern_levels} for the case $m=3, n=2$.
For each group element $g$ the pattern of this shape found at position $g$ must be among the allowed patterns.

\begin{figure}[H]
\centering

\begin{tabular}{ll}
\begin{minipage}[l]{0.39\linewidth}
\begin{tikzpicture}[scale=0.4]

\draw[thick] (0,0) -- (0,4);
\draw[thick] (0,4) to [controls=+(45:1) and +(135:1)] (3,4);
\draw[thick] (3,4) to [controls=+(45:1) and +(135:1)] (6,4);
\draw[thick] (6,4) -- (6,0);
\draw[thick] (0,0) to [controls=+(45:0.7) and +(135:0.7)] (2,0);
\draw[thick] (2,0) to [controls=+(45:0.7) and +(135:0.7)] (4,0);
\draw[thick] (4,0) to [controls=+(45:0.7) and +(135:0.7)] (6,0);


\draw (0,4) node[below left]{\small$g.b$};
\draw (3,4) node[below]{\small$g.ba$};
\draw (6,4) node[below right]{\small$g.ba^2$};
\draw (0,0) node[below left]{\small$g$};
\draw (2,0) node[below]{\small$g.a$};
\draw (4,0) node[below]{\small$g.a^2$};
\draw (6,0) node[below]{\small$g.a^3$};

\end{tikzpicture}
\end{minipage}
&
\begin{minipage}[l]{0.6\linewidth}
\begin{tikzpicture}[scale=0.2]

\foreach \x in {0,...,5} {
\draw[thick] (6*\x-1.5,0) -- (6*\x,4);
\draw[thick] (6*\x,4) to [controls=+(45:1) and +(135:1)] (6*\x+3,4);
\draw[thick] (6*\x+3,4) to [controls=+(45:1) and +(135:1)] (6*\x+6,4);
\draw[thick] (6*\x-1.5,0) to [controls=+(45:0.7) and +(135:0.7)] (6*\x+2-1.5,0);
\draw[thick] (6*\x+2-1.5,0) to [controls=+(45:0.7) and +(135:0.7)] (6*\x+4-1.5,0);
\draw[thick] (6*\x+4-1.5,0) to [controls=+(45:0.7) and +(135:0.7)] (6*\x+6-1.5,0);
\draw[thick] (6*\x+6,4) -- (6*\x+6-1.5,0);
}
\draw[thick, dashed] (-3,4) -- (-1,4);
\draw[thick, dashed] (37,4) -- (39,4);
\draw (40,4) node{$\L$};

\foreach \x in {0,...,4} {
\begin{scope}[shift={(3,0)},color=black!50]
\draw[thick] (6*\x+1.5,0-1.5) -- (6*\x,4);
\draw[thick] (6*\x,4) to [controls=+(45:1) and +(135:1)] (6*\x+3,4);
\draw[thick] (6*\x+3,4) to [controls=+(45:1) and +(135:1)] (6*\x+6,4);
\draw[thick] (6*\x+1.5,0-1.5) to [controls=+(45:0.7) and +(135:0.7)] (6*\x+2+1.5,0-1.5);
\draw[thick] (6*\x+2+1.5,0-1.5) to [controls=+(45:0.7) and +(135:0.7)] (6*\x+4+1.5,0-1.5);
\draw[thick] (6*\x+4+1.5,0-1.5) to [controls=+(45:0.7) and +(135:0.7)] (6*\x+6+1.5,0-1.5);
\draw[thick] (6*\x+6,4) -- (6*\x+6+1.5,0-1.5);
\end{scope}
}

\foreach \x in {0,1,2} {
\begin{scope}[scale=3/2,shift={(2,4*2/3)}]
\draw[thick] (6*\x,0) -- (6*\x,4);
\draw[thick] (6*\x,4) to [controls=+(45:1) and +(135:1)] (6*\x+3,4);
\draw[thick] (6*\x+3,4) to [controls=+(45:1) and +(135:1)] (6*\x+6,4);
\draw[thick] (6*\x,0) to [controls=+(45:0.7) and +(135:0.7)] (6*\x+2,0);
\draw[thick] (6*\x+2,0) to [controls=+(45:0.7) and +(135:0.7)] (6*\x+4,0);
\draw[thick] (6*\x+4,0) to [controls=+(45:0.7) and +(135:0.7)] (6*\x+6,0);
\draw[thick] (6*\x+6,4) -- (6*\x+6,0);
\end{scope}
}

\foreach \x in {0,1,2,3} {
\begin{scope}[scale=3/2,shift={(0,4*2/3)},color=black!50]
\draw[thick] (6*\x,0) -- (6*\x-1.5,4-1.5);
\draw[thick] (6*\x-1.5,4-1.5) to [controls=+(45:1) and +(135:1)] (6*\x+3-1.5,4-1.5);
\draw[thick] (6*\x+3-1.5,4-1.5) to [controls=+(45:1) and +(135:1)] (6*\x+6-1.5,4-1.5);
\draw[thick] (6*\x,0) to [controls=+(45:0.7) and +(135:0.7)] (6*\x+2,0);
\draw[thick] (6*\x+2,0) to [controls=+(45:0.7) and +(135:0.7)] (6*\x+4,0);
\draw[thick] (6*\x+4,0) to [controls=+(45:0.7) and +(135:0.7)] (6*\x+6,0);
\draw[thick] (6*\x+6-1.5,4-1.5) -- (6*\x+6,0);
\end{scope}
}

\foreach \x in {0,1,2} {
\begin{scope}[scale=3/2,shift={(4,4*2/3)},color=black!75]
\draw[thick] (6*\x,0) -- (6*\x+1.5,4-2.5);
\draw[thick] (6*\x+1.5,4-2.5) to [controls=+(45:1) and +(135:1)] (6*\x+3+1.5,4-2.5);
\draw[thick] (6*\x+3+1.5,4-2.5) to [controls=+(45:1) and +(135:1)] (6*\x+6+1.5,4-2.5);
\draw[thick] (6*\x,0) to [controls=+(45:0.7) and +(135:0.7)] (6*\x+2,0);
\draw[thick] (6*\x+2,0) to [controls=+(45:0.7) and +(135:0.7)] (6*\x+4,0);
\draw[thick] (6*\x+4,0) to [controls=+(45:0.7) and +(135:0.7)] (6*\x+6,0);
\draw[thick] (6*\x+6+1.5,4-2.5) -- (6*\x+6,0);
\end{scope}
}

\end{tikzpicture}
\end{minipage}
\end{tabular}

\caption{One the left: the shape of the tiles in $\Gamma_{3,2}$. On the right: some levels in $\Gamma_{3,2}$. The level $\L$ has two successor levels (drawn below the level) and three predecessor levels (drawn above it).} \label{figure.basic_pattern_levels}

\end{figure}
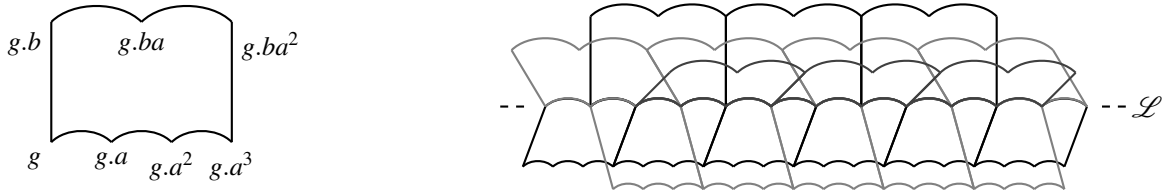

\subsection{Localization in $\Gamma_{m,n}$}

The Cayley graph $\Gamma_{m,n}$ can be projected into the Euclidean plane by a function $\Phi_{m,n}:\BS{m}{n}\rightarrow\R^2$, defined as follows. Let $w$ be a finite word on the alphabet $A=\{ a,b,a^{-1},b^{-1}\}$. Then any element of $\BS{m}{n}$ can be represented by such a word, but this representation is of course non unique. If $x$ is a letter of $A$, we denote by $|w|_x$ the number of occurrences of $x$ in the word $w$. We then define for $x\in A$ the contribution of $x$ to $w$ by $\parallel w\parallel_x=|w|_x-|w|_{x^{-1}}$.

Let $\psi_{m,n}:A^*\rightarrow\R$ be the function defined by induction on the length of the word by
$$
\left\{
\begin{array}{l}
\psi_{m,n}(\varepsilon)=0\text{ where }\varepsilon\text{ is the empty word}\\
\psi_{m,n}(w.b)=\psi_{m,n}(w.b^{-1})=\psi_{m,n}(w)\\
\psi_{m,n}(w.a)=\psi_{m,n}(w)+{\left(\frac{m}{n}\right)}^{\parallel
w\parallel_b}\\
\psi_{m,n}(w.a^{-1})=\psi_{m,n}(w)-{\left(\frac{m}{n}\right)}^{\parallel
w\parallel_b}
\end{array}
\right.
$$

\begin{lemma}\label{lemma.phi}
For every $u,v\in A^*$ one has
$$\psi_{m,n}(u.v)=\psi_{m,n}(u)+\left(\frac{m}{n}\right)^{\parallel u\parallel_{b}}\psi_{m,n}(v).$$
\end{lemma}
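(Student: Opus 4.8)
The plan is to prove the identity by induction on the length of the word $v$, since the recursive definition of $\psi_{m,n}$ builds a word by appending single letters on the right. The one arithmetic fact I will need throughout is that the $b$-contribution is additive under concatenation, namely $\parallel u.v\parallel_b = \parallel u\parallel_b + \parallel v\parallel_b$; this is immediate from $|u.v|_b = |u|_b + |v|_b$ and the analogous equality for $b^{-1}$, applied to the definition $\parallel w\parallel_b = |w|_b - |w|_{b^{-1}}$.

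For the base case $v = \varepsilon$, both sides reduce to $\psi_{m,n}(u)$: the left side because $u.\varepsilon = u$, and the right side because $\psi_{m,n}(\varepsilon) = 0$ kills the second term.

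For the inductive step I would write a word of length $k+1$ as $v.x$ with $x \in A$ a single generator or inverse and $v$ of length $k$, and split into cases according to $x$. If $x \in \{b, b^{-1}\}$, then appending $x$ changes neither $\psi_{m,n}(u.v)$ nor $\psi_{m,n}(v)$, so the identity for $v.x$ follows at once from the inductive hypothesis for $v$. If $x = a$, I would expand $\psi_{m,n}(u.v.a) = \psi_{m,n}(u.v) + (m/n)^{\parallel u.v\parallel_b}$ using the defining recurrence, substitute the inductive hypothesis for $\psi_{m,n}(u.v)$, and then use the additivity fact to factor $(m/n)^{\parallel u.v\parallel_b} = (m/n)^{\parallel u\parallel_b}(m/n)^{\parallel v\parallel_b}$. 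Collecting the common factor $(m/n)^{\parallel u\parallel_b}$ recombines the terms into exactly $\psi_{m,n}(u) + (m/n)^{\parallel u\parallel_b}\psi_{m,n}(v.a)$, again by the recurrence. The case $x = a^{-1}$ is identical up to the sign of the added term.

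There is no real obstacle here: the argument is a routine structural induction. The only point that requires a moment's care is to induct on $v$ rather than on $u$ — the recurrence only tells us how $\psi_{m,n}$ behaves when a letter is appended on the right, so peeling off the last letter of $v$ is what keeps the definition applicable, whereas peeling a letter off $u$ would not interact cleanly with the prefactor $(m/n)^{\parallel u\parallel_b}$. The additivity of $\parallel\cdot\parallel_b$ is precisely what converts the exponent bookkeeping into the multiplicative prefactor appearing in the statement.
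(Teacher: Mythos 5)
Your proof is correct and matches the paper's approach exactly: the paper's entire proof is the one-line remark ``By induction on the length of $v$,'' and your argument fills in precisely that induction, with the base case $v=\varepsilon$, the case split on the appended letter, and the additivity $\parallel u.v\parallel_b = \parallel u\parallel_b + \parallel v\parallel_b$ doing the exponent bookkeeping. Nothing to add.
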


\begin{proof}
 By induction on the length of $v$.
\end{proof}

We can now define the projection function $\Phi_{m,n}:\BS{m}{n}\rightarrow \R^2$ which associates to every element $g$ of $\BS{m}{n}$ its coordinates on the Euclidean plane:
$$\Phi_{m,n}(g)=\left(\psi_{m,n}(w) ,\parallel w \parallel_{b^{-1}} \right),$$
where $w$ is a word representing $g$. The following proposition states that the definition does not depend on the choice of $w$. Its proof is a simple application of Lemma~\ref{lemma.phi}.


\begin{proposition}\label{proposition.Phi_well_defined}
 The function $\Phi_{m,n}$ is well defined on $\BS{m}{n}$.
\end{proposition}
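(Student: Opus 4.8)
The plan is to invoke the standard fact that two words $u,v\in A^*$ represent the same element of $\BS{m}{n}$ if and only if one can be transformed into the other by a finite sequence of \emph{elementary moves}: free reductions and expansions (insertion or deletion of a factor $xx^{-1}$ or $x^{-1}x$ with $x\in\{a,b\}$) together with applications of the defining relation (replacement of a factor $a^m b$ by $b a^n$, or conversely). Since $\Phi_{m,n}(g)=\left(\psi_{m,n}(w),\parallel w\parallel_{b^{-1}}\right)$, it then suffices to verify that each of the two coordinates is unchanged by every elementary move; well-definedness follows by induction on the length of the transforming sequence.

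The second coordinate is immediate. Because $\parallel w\parallel_{b^{-1}}=|w|_{b^{-1}}-|w|_b=-\parallel w\parallel_b$, it records only the signed number of occurrences of $b$. A free move on an $a$-factor alters no occurrence of $b^{\pm 1}$; a free move on a $b$-factor inserts or deletes one $b$ together with one $b^{-1}$; and both $a^m b$ and $b a^n$ contain exactly one $b$ and no $b^{-1}$. Hence $\parallel\cdot\parallel_b$, and therefore the second coordinate, is invariant under every move.

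For the first coordinate, Lemma~\ref{lemma.phi} does the work. Writing an affected word as $w=u\cdot s\cdot v$, where $s$ is the factor being modified, two applications of the lemma give
$$\psi_{m,n}(u\cdot s\cdot v)=\psi_{m,n}(u)+\left(\tfrac{m}{n}\right)^{\parallel u\parallel_b}\left[\psi_{m,n}(s)+\left(\tfrac{m}{n}\right)^{\parallel s\parallel_b}\psi_{m,n}(v)\right].$$
Thus replacing $s$ by $s'$ leaves $\psi_{m,n}(w)$ unchanged provided both $\psi_{m,n}(s)=\psi_{m,n}(s')$ and $\parallel s\parallel_b=\parallel s'\parallel_b$. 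For a free move one takes $s=xx^{-1}$ (or $x^{-1}x$) and $s'=\varepsilon$; a direct computation from the definition gives $\psi_{m,n}(xx^{-1})=0=\psi_{m,n}(\varepsilon)$ and $\parallel xx^{-1}\parallel_b=0$. For the relator move one takes $s=a^m b$ and $s'=b a^n$; computing inductively one finds $\psi_{m,n}(a^k)=k$ and $\psi_{m,n}(b a^k)=k\,\tfrac{m}{n}$, so that $\psi_{m,n}(a^m b)=m=\psi_{m,n}(b a^n)$, while $\parallel a^m b\parallel_b=1=\parallel b a^n\parallel_b$.

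The one point that genuinely needs care — and the only real obstacle — is that equality of the $\psi_{m,n}$-values of the two sides of the relation is not by itself sufficient: the decomposition above also carries the factor $(m/n)^{\parallel s\parallel_b}$ in front of $\psi_{m,n}(v)$, so the contribution of the suffix $v$ is preserved only because $a^m b$ and $b a^n$ share the same $b$-exponent. Once both the value and the $b$-contribution are matched for each elementary move, invariance of both coordinates follows, and hence $\Phi_{m,n}(g)$ depends only on $g$ and not on the chosen representative word.
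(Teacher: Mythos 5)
Your proof is correct and follows exactly the route the paper intends: the paper dismisses the proposition as ``a simple application of Lemma~\ref{lemma.phi}'', and your write-up supplies precisely the missing details --- reduction to invariance under elementary moves, the double application of the lemma to a factorization $w=u\cdot s\cdot v$, and the key verification that $\psi_{m,n}(a^mb)=m=\psi_{m,n}(ba^n)$ \emph{together with} $\parallel a^mb\parallel_b=\parallel ba^n\parallel_b$, which is needed to preserve the suffix's scaled contribution. Your closing remark correctly identifies the only subtle point (matching the $b$-exponent, not just the $\psi$-value), so nothing is missing.
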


\noindent
The function $\Phi_{m,n}$ is not necessarily injective as shown by the following example.

\begin{example}
\label{example.noninjective} Let $m=3$ and $n=2$. Consider the word 
$$\omega=bab^{-1}a^2ba^{-1}b^{-1}a^{-2}.$$
We have
$$\Phi_{3,2}(\omega)=\Phi_{3,2}(\varepsilon)=(0,0).$$
However,  freely reduced words that do not contain $b^{-1}a^{km}b$ or $ba^{kn}nb^{-1}$ as subwords, for any integer $k$, cannot
represent the identity in $\BS{m}{n}$. Thus $\omega$ and $\varepsilon$ represent different elements of the group. Moreover,
Baumslag-Solitar groups are HNN-extensions of $\Z$, thus from Britton's lemma it follows that a finite subgroup of
Baumslag-Solitar group is conjugate to a finite subgroup of $\Z$. Since $\omega$ is not the identity, it has infinite order in
$\BS{3}{2}$. We see that there is an infinite cyclic subgroup that is projected by $\Phi_{3,2}$ to point $(0,0)$.
\end{example}

All elements belonging to the same level project on the same horizontal line, thus we can speak of the \define{height of a level}. The height is $\parallel w \parallel_{b^{-1}}$ for the words $w$ that represent the elements of the level.

\section{Weakly aperiodic tile set on $\BS{m}{n}$}
\label{section.aperiodic}

The Cayley graph of $\BS{m}{n}$ consists in infinitely many hyperbolic layers that merge at various levels, and even if a strongly aperiodic tile set is known on the hyperbolic plane~\cite{DBLP:conf/mcu/Kari07}, it is not straightforward to synchronize all the layers to get an aperiodic tile set on $\BS{m}{n}$, and moreover we will see that strong aperiodicity is not even preserved. On the hyperbolic plane $\mathbb{H}^2$, there also exists a hierarchical strongly aperiodic tile set~\cite{DBLP:journals/tcs/Goodman-Strauss10} but it seems difficult, for the reason presented above, to adapt this construction to get a hierarchical weakly aperiodic tile set on~$\BS{m}{n}$.

\subsection{Representations of real numbers}
\label{subsection.representations}

Let $i\in\Z$. We say that a bi-infinite sequence $(x_k)_{k\in\Z}$ of $i$'s and $(i+1)$'s \define{represents} a real number $x\in [i,i+1]$ if there exists a sequence of intervals $I_1,I_2,\dots\subseteq\Z$ of increasing lengths $n_1<n_2<\dots$ such that
$$\lim_{k\rightarrow\infty} \frac{\sum_{j\in I_k} x_j}{n_k} =x,$$
that is to say there is an infinite sequence of intervals of increasing lengths whose averages converge to $x$. Note that if
$(x_k)_{k\in\Z}$ is a representation of $x$, all the shifted sequences $(x_{\ell+k})_{k\in\Z}$ for every $\ell\in\Z$ are also
representations of $x$. Note also that a sequence $(x_k)_{k\in\Z}$ can represent several distinct real numbers, since different
interval sequences may converge to different points, and that by a compactness argument, every sequence $(x_k)_{k\in\Z}$ does represent at least one real number $x$.

For a real number $x$, we denote by $\floor{x}$ the integer part of $x$, that is the largest integer which is lower or equal to $x$. Let $x,r\in \R$ be arbitrary. For every $k\in\Z$, let 
$$B_k=\floor{(r+k)x}-\floor{(r+k-1)x}.$$
The bi-infinite sequence $(B_k)_{k\in\Z}$ is called a \define{balanced representation of $x$}. Clearly $(B_k)_{k\in\Z}$ is a
representation of $x$ in the sense defined above. Balanced representations of irrational $x$ are sturmian sequences, while for rational $x$ the sequence is periodic.

\subsection{Dynamical systems as tilings}
\label{subsection.ds_tilings}

Let $I$ be a compact interval included in $\R$, and $T:I\rightarrow I$ be a dynamical system on $I$. We here restrict ourselves to piecewise affine maps. Following the ideas in~\cite{DBLP:journals/dm/Kari96} and \cite{DBLP:conf/mcu/Kari07} one can construct a tile set  such that if a level in a tiling of $\Gamma_{m,n}$  represents some real number $x$, then its successor levels represent $T(x)$. Going from one level to one of its successor level corresponds to one iteration of $T$, and a decrease of the height of the level by $1$.

\begin{figure}[H]
\begin{center}
\begin{tikzpicture}[scale=0.5]
 
\draw[thick] (0,0) -- (0,4);
\draw[thick] (0,4) to [controls=+(45:1) and +(135:1)] (3,4);
\draw[thick] (3,4) to [controls=+(45:1) and +(135:1)] (6,4);
\draw[thick] (6,4) -- (6,0);
\draw[thick] (0,0) to [controls=+(45:0.7) and +(135:0.7)] (2,0);
\draw[thick] (2,0) to [controls=+(45:0.7) and +(135:0.7)] (4,0);
\draw[thick] (4,0) to [controls=+(45:0.7) and +(135:0.7)] (6,0);


\draw (1.5,5) node{\small$x_1$};
\draw (4.5,5) node{\small$x_2$};
\draw (-1,2) node{\small$c$};
\draw (7,2) node{\small$d$};
\draw (1,1) node{\small$y_1$};
\draw (3,1) node{\small$y_2$};
\draw (5,1) node{\small$y_3$};

\end{tikzpicture}
\caption{The general form of tiles in $\BS{3}{2}$. The colors are numbers.} \label{figure.tilecolors}
\end{center}
\end{figure}
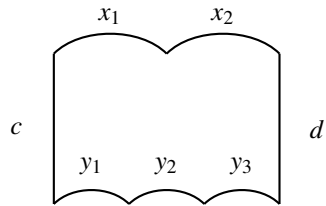

Consider the case $\BS{3}{2}$. The allowed patterns (tiles) are of the form shown in Figure~\ref{figure.tilecolors}. We say that the tiles multiply by number $q$ if the colors are numbers satisfying the relation
\begin{equation}
\label{equation1} q\frac{x_1+x_2}{2} + c = \frac{y_1+y_2+y_3}{3} +d.
\end{equation}

Consider a sequence of $k$ such tiles on some level, next to each other so that the left vertical edge of a tile is the same as the right edge of the previous tile. Averaging (\ref{equation1}) over all $k$ tiles yields then
$$
qx+\frac{c_1}{k}=y+\frac{d_k}{k}
$$
where $x$ is the average of the labels on the segment of $2k$ edges on the previous level, $y$ is the average of the labels on the corresponding segment of $3k$ edges on the level below, and $c_1$ and $d_k$ are the left and right vertical edges of the first and the last tile in the row, respectively. Letting $k$ grow to infinity, we see that if the previous level represents some number $x$ then the next level necessarily represents number $qx$, as required.



\subsection{An explicit tile set on $\BS{3}{2}$}
\label{subsection.explicit_tileset}

Let $T:[\frac{2}{3};2] \rightarrow [\frac{2}{3};2]$ be the piecewise linear map defined by
$$T: x \mapsto \left\{
  \begin{array}{c}
  2x\text{ if }x\in[\frac{2}{3};1]\\
  \frac{2}{3}x\text{ if }x\in]1;2]
  \end{array}
  \right.
$$
This function was used in~\cite{DBLP:journals/dm/Kari96} to construct an aperiodic set of Wang tiles on $\Z^2$. Here the same
idea yields a weakly aperiodic tile set on the Baumslag-Solitar group  $\BS{3}{2}$.






\begin{proposition}\label{proposition.T_aperiodic}
The dynamical system $T$ has no periodic points.
\end{proposition}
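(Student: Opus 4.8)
The plan is to exploit the fact that $T$ acts, on each of its two linear pieces, simply as multiplication by a constant, and then to invoke the coprimality of $2$ and $3$. First I would record the elementary facts that every point of $[\frac23,2]$ is strictly positive and that each branch keeps the orbit inside $[\frac23,2]$: for $x\in[\frac23,1]$ one has $2x\in[\frac43,2]$, and for $x\in\,]1,2]$ one has $\frac23x\in\,]\frac23,\frac43]$, so $T^k$ is defined for all $k$ and no iterate ever vanishes.

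Next, suppose for contradiction that $x$ is a periodic point, say $T^k(x)=x$ with $k\geq 1$. At each step the current iterate $T^i(x)$ lies in one of the two pieces, so $T(T^i(x))=\mu_i\,T^i(x)$ with $\mu_i\in\{2,\frac23\}$. Composing the $k$ steps around the orbit gives
$$x=T^k(x)=\Big(\prod_{i=0}^{k-1}\mu_i\Big)x.$$
If $a$ of the factors equal $2$ and $b=k-a$ of them equal $\frac23$, the product is $2^{a}(\frac23)^{b}=2^{k}3^{-b}$. Since $x>0$ I may cancel it, obtaining $2^{k}=3^{b}$.

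The argument then finishes by unique factorization: $2^{k}=3^{b}$ forces $k=b=0$, contradicting $k\geq 1$, so no periodic point exists. I expect the only delicate point to be the bookkeeping that justifies the multiplier description at every step --- in particular checking the boundary value $x=1$, which belongs to the first piece and maps to $2$ --- but this is routine, and once it is in place the coprimality of $2$ and $3$ makes the conclusion immediate. In fact the same computation shows slightly more: the net multiplicative effect of any finite orbit segment of length $k$ is $2^{k}3^{-b}$, which can never equal $1$ for $k\geq1$, so $T$ admits no periodic orbit of any period.
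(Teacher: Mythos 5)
Your proof is correct and follows essentially the same route as the paper: the paper's proof likewise observes that a period-$n$ point would satisfy $2^k\left(\frac{2}{3}\right)^\ell x = x$ with $k+\ell=n$ and derives a contradiction from $x\neq 0$ and the coprimality of $2$ and $3$. Your version merely spells out the bookkeeping (invariance of the interval, the multiplier at each step, the boundary case $x=1$) that the paper leaves implicit.
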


\begin{proof}
Suppose that $x\in[\frac{2}{3};2]$ is such that $T^n(x)=x$, with $n\geq 1$. Then there exist $k,\ell\in\N$ with $k+\ell=n$ and such that $2^k\left(\frac{2}{3}\right)^\ell x=x$. This is not possible since $x\neq0$ and $2$ and $3$ are mutually prime.
\end{proof}

We implement this dynamical system $T$ by a tile set $\tau$ on $\BS{3}{2}$. The tiles in $\tau$ have the form presented in
Figure~\ref{figure.tilecolors}. The tile set consists of two disjoint parts that implement multiplication by $2$ and
multiplication by $\frac{2}{3}$, respectively. The vertical edge identifies which part is used so that each level is tiled with tiles that multiply by the same number.

The tiles in the part that multiplies by $2$ have the following properties:
\begin{equation}
\label{condition1}
\left.
\begin{minipage}[l]{0.8\linewidth}
\begin{itemize}
\item the top colors $x_1,x_2\in\{0,1\}$,
\item the bottom colors $y_1,y_2,y_3\in\{1,2\}$, and
\item the relation (\ref{equation1}) is satisfied with $q=2$.
\end{itemize}
\end{minipage}
\right\}
\end{equation}
The tiles in the part that multiplies by $\frac{2}{3}$ satisfies the following:
\begin{equation}
\label{condition2}
\left.
\begin{minipage}[l]{0.8\linewidth}
\begin{itemize}
\item The top colors $x_1,x_2\in\{1,2\}$,
\item the bottom colors $y_1,y_2,y_3\in\{0,1\}$ or $y_1,y_2,y_3\in\{1,2\}$, and
\item the relation (\ref{equation1}) is satisfied with $q=\frac{2}{3}$.
\end{itemize}
\end{minipage}
\right\}
\end{equation}
Without specifying the tile set further, we can already see that it cannot admit a strongly periodic tiling. Based on the discussion above, if a level of a valid tiling represents a real number $x$ then its successor level necessarily represents
the number $T(x)$. (Or if it happens that $x=1$ then the successor might represent $\frac{2}{3}$ instead of $T(x)=2$, but
this causes no problems as the system still cannot have periodic points.)

\begin{proposition}\label{proposition.weakly_aperiodic}
There does not exist a strongly periodic valid tiling by $\tau$.
\end{proposition}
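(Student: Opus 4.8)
The plan is to argue by contradiction. Suppose $x$ is a valid tiling whose period group $\Per(x)=H$ has finite index in $\BS{3}{2}$. I would extract from $H$ enough periodicity to force a level to represent a real number multiplied back to itself by a product of the factors $2$ and $\frac23$, which is impossible.

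The first step exploits finite index twice, using that for any subgroup $K$ one has $[K:H\cap K]\le[\BS{3}{2}:H]<\infty$. Taking $K=\langle b\rangle\cong\Z$ yields $b^N\in H$ for some $N\ge1$, so $x$ is $b^N$-periodic. Taking $K=g\langle a\rangle g^{-1}\cong\Z$ for an arbitrary $g$ yields a nontrivial element $ga^kg^{-1}\in H$ with $k\neq0$; left translation by it fixes the level $\L_g$ setwise and shifts the sequence read along $\L_g$ by $k$ positions, so $(x_{ga^j})_{j\in\Z}$ is $k$-periodic. Hence \emph{every} level of $x$ carries a periodic colour sequence and therefore represents a unique real number $\rho(\L)$, the average over one period. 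In particular the averaging relation recalled in Section~\ref{subsection.ds_tilings} becomes the unambiguous identity $\rho(S)=q\,\rho(P)$ whenever $S$ is a successor of $P$, where $q\in\{2,\frac23\}$ is the common multiplier of the tiles on $S$.

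Next I would run this identity along an explicit chain of successors. Since the successors of $\L_g$ are the levels $\L_{ga^jb^{-1}}$, repeatedly taking $j=0$ produces $\L_\varepsilon,\L_{b^{-1}},\dots,\L_{b^{-N}}$, with $\L_{b^{-(i+1)}}$ a successor of $\L_{b^{-i}}$. Multiplying the successive identities gives $\rho(\L_{b^{-N}})=2^{k}(\frac23)^{\ell}\rho(\L_\varepsilon)$ for some $k,\ell\ge0$ with $k+\ell=N$. On the other hand $b^{-N}\in H$ forces $x_{b^{-N}a^j}=x_{a^j}$ for all $j$, so $\L_{b^{-N}}$ and $\L_\varepsilon$ carry identical colour sequences and thus $\rho(\L_{b^{-N}})=\rho(\L_\varepsilon)$. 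Provided $\rho(\L_\varepsilon)\neq0$, these two facts give $2^{k}(\frac23)^{\ell}=1$, i.e. $2^{N}=3^{\ell}$ with $N\ge1$, which is impossible since $2$ and $3$ are coprime; this is exactly the contradiction underlying Proposition~\ref{proposition.T_aperiodic}.

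The step I expect to be most delicate is the uniqueness of the represented number together with the guarantee $\rho(\L_\varepsilon)\neq0$, and this is precisely where the finite-index hypothesis does its work. Uniqueness fails for arbitrary tilings, since a level may represent many reals, as the excerpt notes; without it one could only place the image among the numbers represented by a successor, which is too weak to close the loop, and the intersection argument with the conjugates $g\langle a\rangle g^{-1}$ is what removes this. For $\rho(\L_\varepsilon)\neq0$ I would observe that a level whose colour sequence is identically $0$ cannot occur in a valid tiling: its successors would be forced to use multiplier $2$, because their top colours equal that all-zero sequence and so lie in $\{0,1\}$, excluding the multiplier-$\frac23$ tiles; yet a multiplier-$2$ level has bottom colours in $\{1,2\}$ and hence represents a number $\ge1$, contradicting $\rho(S)=2\cdot0=0$. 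Since $\rho(\L)=0$ forces the all-zero sequence, every level of a valid tiling represents a nonzero number, which completes the argument. Phrasing everything through the multiplier $q$ rather than through $T$ itself has the side benefit of absorbing the exceptional branch at $x=1$ without further comment.
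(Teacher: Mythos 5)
Your proof is correct, and its first half is the same as the paper's: both exploit the finite index of $\Per(x)$ to find a nontrivial period in every conjugate $g\langle a\rangle g^{-1}$, so that each level carries a periodic colour sequence, hence represents a unique real number, and the averaging relation becomes the exact identity $\rho(S)=q\,\rho(P)$. You diverge in the endgame. The paper extracts a common period $p=i!$ for all levels, observes that every represented number is a rational $c/p$ from a finite set, and pigeonholes along a successor chain to get $T^h(c/p)=c/p$ for some $h>0$, contradicting Proposition~\ref{proposition.T_aperiodic} once $c=0$ is excluded. You instead intersect $\Per(x)$ with $\langle b\rangle$ to obtain a vertical period $b^N$, which closes an exact loop of length $N$: combining $\rho(\L_{b^{-N}})=2^k\left(\frac{2}{3}\right)^\ell\rho(\L_\varepsilon)$ with $k+\ell=N$ and $\rho(\L_{b^{-N}})=\rho(\L_\varepsilon)$ yields $2^N=3^\ell$, absurd by coprimality. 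This buys you three things: no pigeonhole and no $i!$ bookkeeping; independence from Proposition~\ref{proposition.T_aperiodic}, whose coprimality argument you inline; and, as you note, the branch at $x=1$ is absorbed by working with multipliers $q$ rather than with $T$, whereas the paper must dismiss it in a parenthetical remark. The cost is one extra (easy) group-theoretic input, the infinitude of $\langle b\rangle$. Your non-vanishing argument is sound and is essentially the paper's observation that the alphabets forbid consecutive levels representing $0$, read downward; it works because, a level $S$ having three predecessors, the single row of tiles at positions $ha^{3k}$ whose top edges lie on the given all-zero predecessor already covers every bottom edge of $S$, forcing all of $S$ into $\{1,2\}$. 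That same interleaving is the one phrase to tighten: the shared vertical edges force a common multiplier only within each of the three rows along $S$, not across all tiles on $S$; your chain uses one row per step, so the argument is unaffected, but ``the common multiplier of the tiles on $S$'' should be understood per row.
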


\begin{proof}
Assume that there is a strongly periodic tiling $\pi$. Because $\Per(\pi)$ has finite index $i$, every infinite subgroup of
$\BS{3}{2}$ contains a non-zero period. In particular, considering the subgroup $\{g a^k g^{-1}\ |\ k\in\Z\}$ for an arbitrary $g$, we see that there is a positive integer $k_g\leq i$ such that $g a^{k_g}g^{-1}\in \Per(\pi)$. Note that the translation by $g a^{k_g} g^{-1}$ translates the level $L_g$ by $k_g$ positions, that is, each $ga^j$ gets translated to $ga^{j+k_g}$. We see that $\pi$ is periodic on every level, and $p=i!$ is a common period.

Hence each level represents a unique number, and this number is a rational number of the form $\frac{c}{p}$ for some integer $c$. But there are only finitely many possible numbers since the labeling alphabets are finite. There exists hence $h>0$ such that $T^h(\frac{c}{p})=\frac{c}{p}$. Moreover, $c\neq 0$ because the choice of the alphabets in (\ref{condition1}) and (\ref{condition2}) prevent two consecutive levels to represent $0$. But this contradicts the fact that $T$ has no periodic points.
\end{proof}

\subsection{Weakly periodic valid tiling}

We next specify the tile set $\tau$ in more details to establish its weak aperiodicity. The tiles have to admit a valid tiling. To that purpose we may include in the tile set as many tiles as we need, as long as they satisfy the conditions (\ref{condition1}) and (\ref{condition2}) that guarantee that $\tau$ does not admit a strongly periodic tiling.

Our approach is to specify a coloring $\gamma$ of the edges of the Cayley graph and observe that at each vertex (\ref{condition1}) and (\ref{condition2}) are satisfied. We can then take the observed patterns in $\gamma$ as the set of allowed patterns. Our coloring $\gamma$ assigns the same color to all edges that are projected to the same position by the function $\Phi_{3,2}$.

Fix one bi-infinite orbit $(x_k)_{k\in\Z}$ by the dynamical system $T$ of the previous section. In other words, $x_{k}=T(x_{k-1})$ for all $k\in\Z$. Let $q_k\in\{\frac{2}{3},2\}$ be the used multiplier so that $x_{k}=q_kx_{k-1}$. We make in the following a corresponding coloring $\gamma$ of $\BS{3}{2}$. Let $g$ be an arbitrary group element projected to $\Phi_{3,2}(g)=\left(\alpha,\beta\right)$. Let $k=\beta$, and denote $x=x_k$ and $q=q_k$.
\begin{itemize}
\item The label of the edge $g\longrightarrow ga$ is
\begin{equation}
\label{equation.ga}
\bigfloor{\left(\frac{3}{2}\right)^{\beta}\left(\alpha+1\right) x}
-
\bigfloor{\left(\frac{3}{2}\right)^{\beta}\alpha x}.
\end{equation}
\item The label of the edge $g\longrightarrow gb$ is
\begin{equation}
\label{equation.gb}
\frac{q}{2}\bigfloor{\left(\frac{3}{2}\right)^{\beta-1}\alpha \frac{x}{q}}
-
\frac{1}{3}\bigfloor{\left(\frac{3}{2}\right)^{\beta}\alpha x}.
\end{equation}
Moreover, the label identifies $q=q_k$.
\end{itemize}

Note that the labels are determined by the coordinates $\alpha$ and $\beta$ so they are the same on all edges that project to the same point by $\Phi_{3,2}$. We have hence a well-defined edge-coloring of the projected Cayley graph. From (\ref{equation.ga})
we see that the edge $ga^j\longrightarrow ga^{j+1}$ is labeled by number
\begin{equation}
\label{equation.gaj}
\bigfloor{\left(\frac{3}{2}\right)^{\beta}\left(\alpha+j+1\right) x}
-
\bigfloor{\left(\frac{3}{2}\right)^{\beta}\left(\alpha+j\right) x}
\end{equation}
for all $j\in\Z$. Hence the sequence of labels on level $L_g$ is a balanced representation (defined in Section~\ref{subsection.representations}) of number $x=x_k$.

Consider the label $t$ of $g\longrightarrow gb$ given by (\ref{equation.gb}). For both $q=2$  and $q=\frac{2}{3}$ we have that
$t\in \frac{1}{3}\Z$, and by removing the floor functions in (\ref{equation.gb}) we easily get
$$
-\frac{1}{3}
<
t
<
\frac{q}{2}.
$$
We see that in the case $q=\frac{2}{3}$ the label is $0$, while in the case $q=2$ we have $t\in \{0,\frac{1}{3},\frac{2}{3}\}$.

Consider then the pattern of the form shown in Figure~\ref{figure.tilecolors} extracted from this coloring $\gamma$ at position $g$. More precisely, let the lower left corner of the shape be the group element $g$. The sum of the three bottom
edges $g\longrightarrow ga\longrightarrow ga^2\longrightarrow ga^3$ is the sum of (\ref{equation.gaj})
for $j=0,1,2$. The sum is telescoping:
$$
y_1+y_2+y_3 =
\bigfloor{\left(\frac{3}{2}\right)^{\beta}\left(\alpha+3\right) x}
-
\bigfloor{\left(\frac{3}{2}\right)^{\beta}\alpha x}.
$$
Analogously, the sum of the top edges $gb\longrightarrow gba \longrightarrow gba^2$ is
$$
x_1+x_2 =
\bigfloor{\left(\frac{3}{2}\right)^{\beta-1}\left(\alpha+2\right) \frac{x}{q}}
-
\bigfloor{\left(\frac{3}{2}\right)^{\beta-1}\alpha \frac{x}{q}}.
$$
From (\ref{equation.gb}) we get the colors of the edges $g\longrightarrow gb$ and
$ga^3\longrightarrow ga^3b$ as follows:
$$
\begin{array}{rcl}
c &=&
\frac{q}{2}\bigfloor{\left(\frac{3}{2}\right)^{\beta-1}\alpha \frac{x}{q}}
-
\frac{1}{3}\bigfloor{\left(\frac{3}{2}\right)^{\beta}\alpha x},\\ \\
d &=&
\frac{q}{2}\bigfloor{\left(\frac{3}{2}\right)^{\beta-1}\left(\alpha+2\right) \frac{x}{q}}
-
\frac{1}{3}\bigfloor{\left(\frac{3}{2}\right)^{\beta}\left(\alpha+3\right) x}.
\end{array}
$$
A direct calculation then shows that (\ref{equation1}) holds, that is, the pattern multiplies by $q$ as required. We conclude that the observed patterns in coloring $\gamma$ satisfy the conditions (\ref{condition1}) and (\ref{condition2}). Moreover, there are only finitely many different patterns since the labels come from finite sets. If we choose the observed patterns as the tiles, i.e. as the allowed patterns, we obtain a tile set that admits a valid tiling, but according to
Proposition~\ref{proposition.weakly_aperiodic} does not admit a strongly periodic tiling.

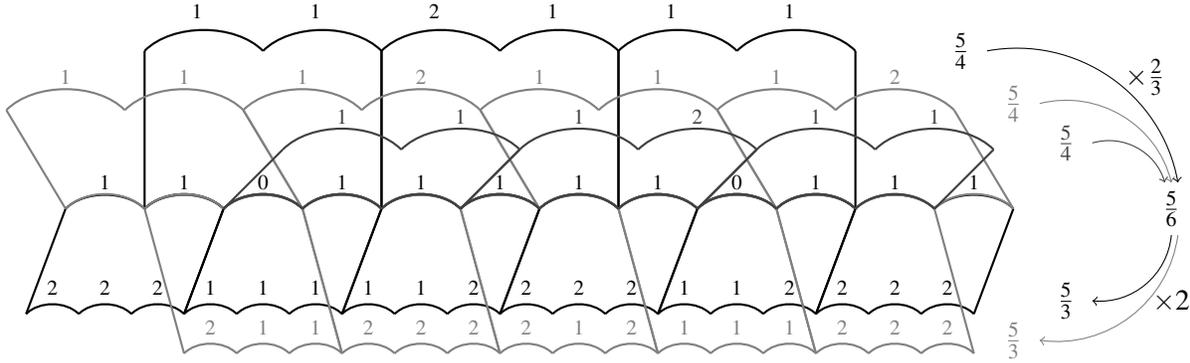
\begin{figure}[h]
\begin{bigcenter}
\begin{tikzpicture}[scale=0.35]

\draw (42,4) node{$\frac{5}{6}$};

\draw (34,10) node{$\frac{5}{4}$};
\draw (36,8) node[color=black!50]{$\frac{5}{4}$};
\draw (38,6.5) node[color=black!75]{$\frac{5}{4}$};

\draw[->] (35,10) to[bend left=45] (42.25,5);
\draw[->,color=black!50] (37,8) to[bend left=45] (42,5);
\draw[->,color=black!75] (39,6.5) to[bend left=45] (41.75,5);

\draw (38,0.5) node{$\frac{5}{3}$};
\draw (36,-1) node[color=black!50]{$\frac{5}{3}$};

\draw[->] (42,3) to[bend left=45] (39,0.5);
\draw[->,color=black!50] (42.25,3) to[bend left=45] (37,-1);

\draw (42,0.5) node{$\times 2$};
\draw (41,9) node{$\times \frac{2}{3}$};

\draw (-0.5,1) node{\scriptsize$2$};
\draw (1.5,1) node{\scriptsize$2$};
\draw (3.5,1) node{\scriptsize$2$};
\draw (5.5,1) node{\scriptsize$1$};
\draw (7.5,1) node{\scriptsize$1$};
\draw (9.5,1) node{\scriptsize$1$};
\draw (11.5,1) node{\scriptsize$1$};
\draw (13.5,1) node{\scriptsize$1$};
\draw (15.5,1) node{\scriptsize$2$};
\draw (17.5,1) node{\scriptsize$2$};
\draw (19.5,1) node{\scriptsize$2$};
\draw (21.5,1) node{\scriptsize$2$};
\draw (23.5,1) node{\scriptsize$1$};
\draw (25.5,1) node{\scriptsize$1$};
\draw (27.5,1) node{\scriptsize$2$};
\draw (29.5,1) node{\scriptsize$2$};
\draw (31.5,1) node{\scriptsize$2$};
\draw (33.5,1) node{\scriptsize$2$};

\draw (5.5,-0.5) node{\textcolor{black!50}{\scriptsize$2$}};
\draw (7.5,-0.5) node{\textcolor{black!50}{\scriptsize$1$}};
\draw (9.5,-0.5) node{\textcolor{black!50}{\scriptsize$1$}};
\draw (11.5,-0.5) node{\textcolor{black!50}{\scriptsize$2$}};
\draw (13.5,-0.5) node{\textcolor{black!50}{\scriptsize$2$}};
\draw (15.5,-0.5) node{\textcolor{black!50}{\scriptsize$2$}};
\draw (17.5,-0.5) node{\textcolor{black!50}{\scriptsize$2$}};
\draw (19.5,-0.5) node{\textcolor{black!50}{\scriptsize$1$}};
\draw (21.5,-0.5) node{\textcolor{black!50}{\scriptsize$2$}};
\draw (23.5,-0.5) node{\textcolor{black!50}{\scriptsize$1$}};
\draw (25.5,-0.5) node{\textcolor{black!50}{\scriptsize$1$}};
\draw (27.5,-0.5) node{\textcolor{black!50}{\scriptsize$1$}};
\draw (29.5,-0.5) node{\textcolor{black!50}{\scriptsize$2$}};
\draw (31.5,-0.5) node{\textcolor{black!50}{\scriptsize$2$}};
\draw (33.5,-0.5) node{\textcolor{black!50}{\scriptsize$2$}};

\draw (1.5,5) node{\scriptsize$1$};
\draw (4.5,5) node{\scriptsize$1$};
\draw (7.5,5) node{\scriptsize$0$};
\draw (10.5,5) node{\scriptsize$1$};
\draw (13.5,5) node{\scriptsize$1$};
\draw (16.5,5) node{\scriptsize$1$};
\draw (19.5,5) node{\scriptsize$1$};
\draw (22.5,5) node{\scriptsize$1$};
\draw (25.5,5) node{\scriptsize$0$};
\draw (28.5,5) node{\scriptsize$1$};
\draw (31.5,5) node{\scriptsize$1$};
\draw (34.5,5) node{\scriptsize$1$};

\draw (0,9) node{\textcolor{black!50}{\scriptsize$1$}};
\draw (4.5,9) node{\textcolor{black!50}{\scriptsize$1$}};
\draw (9,9) node{\textcolor{black!50}{\scriptsize$1$}};
\draw (13.5,9) node{\textcolor{black!50}{\scriptsize$2$}};
\draw (18,9) node{\textcolor{black!50}{\scriptsize$1$}};
\draw (22.5,9) node{\textcolor{black!50}{\scriptsize$1$}};
\draw (27,9) node{\textcolor{black!50}{\scriptsize$1$}};
\draw (31.5,9) node{\textcolor{black!50}{\scriptsize$2$}};

\draw (5,11.5) node{\scriptsize$1$};
\draw (9.5,11.5) node{\scriptsize$1$};
\draw (14,11.5) node{\scriptsize$2$};
\draw (18.5,11.5) node{\scriptsize$1$};
\draw (23,11.5) node{\scriptsize$1$};
\draw (27.5,11.5) node{\scriptsize$1$};

\draw (10.5,7.5) node{\textcolor{black!75}{\scriptsize$1$}};
\draw (15,7.5) node{\textcolor{black!75}{\scriptsize$1$}};
\draw (19.5,7.5) node{\textcolor{black!75}{\scriptsize$1$}};
\draw (24,7.5) node{\textcolor{black!75}{\scriptsize$2$}};
\draw (28.5,7.5) node{\textcolor{black!75}{\scriptsize$1$}};
\draw (33,7.5) node{\textcolor{black!75}{\scriptsize$1$}};

\foreach \x in {0,...,5} {
\draw[thick] (6*\x-1.5,0) -- (6*\x,4);
\draw[thick] (6*\x,4) to [controls=+(45:1) and +(135:1)] (6*\x+3,4);
\draw[thick] (6*\x+3,4) to [controls=+(45:1) and +(135:1)] (6*\x+6,4);
\draw[thick] (6*\x-1.5,0) to [controls=+(45:0.7) and +(135:0.7)] (6*\x+2-1.5,0);
\draw[thick] (6*\x+2-1.5,0) to [controls=+(45:0.7) and +(135:0.7)] (6*\x+4-1.5,0);
\draw[thick] (6*\x+4-1.5,0) to [controls=+(45:0.7) and +(135:0.7)] (6*\x+6-1.5,0);
\draw[thick] (6*\x+6,4) -- (6*\x+6-1.5,0);
}

\foreach \x in {0,...,4} {
\begin{scope}[shift={(3,0)},color=black!50]
\draw[thick] (6*\x+1.5,0-1.5) -- (6*\x,4);
\draw[thick] (6*\x,4) to [controls=+(45:1) and +(135:1)] (6*\x+3,4);
\draw[thick] (6*\x+3,4) to [controls=+(45:1) and +(135:1)] (6*\x+6,4);
\draw[thick] (6*\x+1.5,0-1.5) to [controls=+(45:0.7) and +(135:0.7)] (6*\x+2+1.5,0-1.5);
\draw[thick] (6*\x+2+1.5,0-1.5) to [controls=+(45:0.7) and +(135:0.7)] (6*\x+4+1.5,0-1.5);
\draw[thick] (6*\x+4+1.5,0-1.5) to [controls=+(45:0.7) and +(135:0.7)] (6*\x+6+1.5,0-1.5);
\draw[thick] (6*\x+6,4) -- (6*\x+6+1.5,0-1.5);
\end{scope}
}

\foreach \x in {0,1,2} {
\begin{scope}[scale=3/2,shift={(2,4*2/3)}]
\draw[thick] (6*\x,0) -- (6*\x,4);
\draw[thick] (6*\x,4) to [controls=+(45:1) and +(135:1)] (6*\x+3,4);
\draw[thick] (6*\x+3,4) to [controls=+(45:1) and +(135:1)] (6*\x+6,4);
\draw[thick] (6*\x,0) to [controls=+(45:0.7) and +(135:0.7)] (6*\x+2,0);
\draw[thick] (6*\x+2,0) to [controls=+(45:0.7) and +(135:0.7)] (6*\x+4,0);
\draw[thick] (6*\x+4,0) to [controls=+(45:0.7) and +(135:0.7)] (6*\x+6,0);
\draw[thick] (6*\x+6,4) -- (6*\x+6,0);
\end{scope}
}

\foreach \x in {0,1,2,3} {
\begin{scope}[scale=3/2,shift={(0,4*2/3)},color=black!50]
\draw[thick] (6*\x,0) -- (6*\x-1.5,4-1.5);
\draw[thick] (6*\x-1.5,4-1.5) to [controls=+(45:1) and +(135:1)] (6*\x+3-1.5,4-1.5);
\draw[thick] (6*\x+3-1.5,4-1.5) to [controls=+(45:1) and +(135:1)] (6*\x+6-1.5,4-1.5);
\draw[thick] (6*\x,0) to [controls=+(45:0.7) and +(135:0.7)] (6*\x+2,0);
\draw[thick] (6*\x+2,0) to [controls=+(45:0.7) and +(135:0.7)] (6*\x+4,0);
\draw[thick] (6*\x+4,0) to [controls=+(45:0.7) and +(135:0.7)] (6*\x+6,0);
\draw[thick] (6*\x+6-1.5,4-1.5) -- (6*\x+6,0);
\end{scope}
}

\foreach \x in {0,1,2} {
\begin{scope}[scale=3/2,shift={(4,4*2/3)},color=black!75]
\draw[thick] (6*\x,0) -- (6*\x+1.5,4-2.5);
\draw[thick] (6*\x+1.5,4-2.5) to [controls=+(45:1) and +(135:1)] (6*\x+3+1.5,4-2.5);
\draw[thick] (6*\x+3+1.5,4-2.5) to [controls=+(45:1) and +(135:1)] (6*\x+6+1.5,4-2.5);
\draw[thick] (6*\x,0) to [controls=+(45:0.7) and +(135:0.7)] (6*\x+2,0);
\draw[thick] (6*\x+2,0) to [controls=+(45:0.7) and +(135:0.7)] (6*\x+4,0);
\draw[thick] (6*\x+4,0) to [controls=+(45:0.7) and +(135:0.7)] (6*\x+6,0);
\draw[thick] (6*\x+6+1.5,4-2.5) -- (6*\x+6,0);
\end{scope}
}

\end{tikzpicture}
\caption{An example of tiling by $\tau$, corresponding to the orbit $\left(\dots,\frac{5}{4},\frac{5}{6},\frac{5}{3},\dots\right)$ in the dynamical system~$T$. For a better readability carries are omitted.}
\label{figure.example_orbit}
\end{bigcenter}
\end{figure}

\begin{proposition}\label{proposition.not_strongly_aperiodic}
There exists a weakly periodic valid tiling by $\tau$.
\end{proposition}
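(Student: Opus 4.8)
The plan is to show that the explicit valid tiling $\gamma$ constructed above is invariant under left translation by a single group element of infinite order; this produces an infinite cyclic subgroup of periods and hence exhibits weak periodicity. The natural candidate is an element that projects to the origin under $\Phi_{3,2}$, since $\gamma$ was designed so that the color of every edge depends only on the projected coordinates $(\alpha,\beta)$ of its source.

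First I would isolate the key invariance property: if $m\in\BS{3}{2}$ satisfies $\Phi_{3,2}(m)=(0,0)$, then $\Phi_{3,2}(mg)=\Phi_{3,2}(g)$ for every $g$. Represent $m$ by a word $u$ and $g$ by a word $w$. The hypothesis $\Phi_{3,2}(m)=(0,0)$ says exactly that $\psi_{3,2}(u)=0$ and $\parallel u\parallel_{b^{-1}}=0$, equivalently $\parallel u\parallel_{b}=0$. Applying Lemma~\ref{lemma.phi} to the product $uw$ gives $\psi_{3,2}(uw)=\psi_{3,2}(u)+\left(\frac{3}{2}\right)^{\parallel u\parallel_b}\psi_{3,2}(w)=\psi_{3,2}(w)$, while the height coordinate is additive over concatenation, so $\parallel uw\parallel_{b^{-1}}=\parallel w\parallel_{b^{-1}}$. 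Hence $\Phi_{3,2}(mg)=\Phi_{3,2}(g)$, and Proposition~\ref{proposition.Phi_well_defined} guarantees these values are independent of the chosen representatives.

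Next I would convert this into invariance of the tiling. Left multiplication by $m$ is an automorphism of the Cayley graph, carrying each edge $h\longrightarrow hg$ to $mh\longrightarrow mhg$ of the same generator; in particular it maps the pattern at position $g$ onto the pattern at position $mg$. Every edge of that pattern has its endpoints among $g,ga,ga^2,ga^3,gb,gba,gba^2$, and since $\gamma$ assigns colors purely as a function of $\Phi_{3,2}$, the identity $\Phi_{3,2}(mh)=\Phi_{3,2}(h)$ forces each edge of the pattern at $mg$ to carry the same color as the corresponding edge at $g$. Therefore the tile placed at $mg$ equals the tile placed at $g$, that is $\gamma_{mg}=\gamma_g$ for all $g$, which is precisely the statement $\mathfrak{S}_m(\gamma)=\gamma$, i.e. $m\in\Per(\gamma)$.

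Finally I would invoke Example~\ref{example.noninjective}, which exhibits $\omega=bab^{-1}a^2ba^{-1}b^{-1}a^{-2}$ with $\Phi_{3,2}(\omega)=(0,0)$ and of infinite order in $\BS{3}{2}$. By the two steps above, $\omega$ is a period of $\gamma$, so $\Per(\gamma)$ contains the infinite cyclic subgroup $\langle\omega\rangle$; as $\gamma$ is a valid tiling by $\tau$ by the construction preceding this proposition, it is weakly periodic. The only delicate point is the first step, namely upgrading the single identity $\Phi_{3,2}(\omega)=(0,0)$ to the uniform statement $\Phi_{3,2}(\omega g)=\Phi_{3,2}(g)$ for all $g$; this is exactly what Lemma~\ref{lemma.phi} delivers, once one notes that $\parallel\omega\parallel_b=0$ makes the multiplicative factor $\left(\frac{3}{2}\right)^{\parallel\omega\parallel_b}$ equal to $1$.
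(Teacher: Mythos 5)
Your proposal is correct and takes essentially the same route as the paper's own proof: both exhibit the infinite-order element determined by $\omega=bab^{-1}a^2ba^{-1}b^{-1}a^{-2}$ from Example~\ref{example.noninjective}, use $\Phi_{3,2}(\omega)=(0,0)$ to conclude $\Phi_{3,2}(\omega g)=\Phi_{3,2}(g)$ for all $g$, and deduce $\omega\in\Per(\gamma)$ because $\gamma$ colors edges purely as a function of the projection. The only difference is that you spell out, via Lemma~\ref{lemma.phi} and the additivity of $\parallel\cdot\parallel_{b^{-1}}$, the invariance step that the paper asserts in one line --- a worthwhile elaboration, but not a different argument.
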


\begin{proof}
The coloring $\gamma$ described above is a valid tiling. According to Example~\ref{example.noninjective} the group element $h$ determined by the word
$$\omega=bab^{-1}a^2ba^{-1}b^{-1}a^{-2}$$
has infinite order and satisfies $\Phi_{3,2}(h)=(0,0)$. For any group element $g$ we have then $\Phi_{3,2}(hg)=\Phi_{3,2}(g)$ so that $h\in\Per(\gamma)$. So $\Per(\gamma)$ contains an infinite cyclic subgroup, and therefore $\gamma$ is weakly periodic.

\end{proof}

We can explicitly construct $\tau$ by taking all tiles that satisfy (\ref{condition1}) and whose vertical labels $c$ and $d$ come from the set $\{0,\frac{1}{3},\frac{2}{3}\}$, and all tiles that satisfy (\ref{condition2}) and where $c=d=0$. All shapes that appear in $\gamma$ are then included. It turns out the first part (that multiplies by $q=2$) contains 36 tiles, and the second part (for $q=\frac{2}{3}$) contains 10 tiles. These are depicted in Figure~\ref{figure.strongly_aperiodic_tiles_set}.

\begin{figure}[h]
\begin{bigcenter}
\begin{tikzpicture}[scale=0.3]


\foreach \x in {1,3} {
\begin{scope}[shift={(\x*7,24)},scale=0.8]
\draw[thick,fill=black!20] (0,0) -- (0,4)-- (0,4) to [controls=+(45:1) and +(135:1)] (3,4) -- (3,4) to [controls=+(45:1) and +(135:1)] (6,4) -- (6,4) -- (6,0) -- (6,0) to [controls=+(135:0.7) and +(45:0.7)] (4,0) -- (4,0) to [controls=+(135:0.7) and +(45:0.7)] (2,0) -- (2,0) to [controls=+(135:0.7) and +(45:0.7)] (0,0) -- cycle ;
\end{scope}
}

\begin{scope}[shift={(21,24)},scale=0.8]
\draw (1.5,5) node{\scriptsize$1$};
\draw (4.5,5) node{\scriptsize$2$};
\draw (-1,2) node{\scriptsize$0$};
\draw (7,2) node{\scriptsize$0$};
\draw (3,-1) node{\scriptsize\textbf{(2 tiles)}};
\draw (1,1) node{\scriptsize$1$};
\draw (3,1) node{\scriptsize$1$};
\draw (5,1) node{\scriptsize$1$};
\end{scope}

\begin{scope}[shift={(7,24)},scale=0.8]
\draw (1.5,5) node{\scriptsize$2$};
\draw (4.5,5) node{\scriptsize$2$};
\draw (-1,2) node{\scriptsize$0$};
\draw (7,2) node{\scriptsize$0$};
\draw (3,-1) node{\scriptsize\textbf{(3 tiles)}};
\draw (1,1) node{\scriptsize$1$};
\draw (3,1) node{\scriptsize$1$};
\draw (5,1) node{\scriptsize$2$};
\end{scope}


\foreach \x in {1,3} {
\begin{scope}[shift={(\x*7,32)},scale=0.8]
\draw[thick,pattern color=black!20 ,pattern=north east lines] (0,0) -- (0,4)-- (0,4) to [controls=+(45:1) and +(135:1)] (3,4) -- (3,4) to [controls=+(45:1) and +(135:1)] (6,4) -- (6,4) -- (6,0) -- (6,0) to [controls=+(135:0.7) and +(45:0.7)] (4,0) -- (4,0) to [controls=+(135:0.7) and +(45:0.7)] (2,0) -- (2,0) to [controls=+(135:0.7) and +(45:0.7)] (0,0) -- cycle ;
\end{scope}
}

\begin{scope}[shift={(21,32)},scale=0.8]
\draw (1.5,5) node{\scriptsize$1$};
\draw (4.5,5) node{\scriptsize$2$};
\draw (-1,2) node{\scriptsize$0$};
\draw (7,2) node{\scriptsize$0$};
\draw (3,-1) node{\scriptsize\textbf{(2 tiles)}};
\draw (1,1) node{\scriptsize$1$};
\draw (3,1) node{\scriptsize$1$};
\draw (5,1) node{\scriptsize$1$};
\end{scope}

\begin{scope}[shift={(7,32)},scale=0.8]
\draw (1.5,5) node{\scriptsize$1$};
\draw (4.5,5) node{\scriptsize$1$};
\draw (-1,2) node{\scriptsize$0$};
\draw (7,2) node{\scriptsize$0$};
\draw (3,-1) node{\scriptsize\textbf{(3 tiles)}};
\draw (1,1) node{\scriptsize$0$};
\draw (3,1) node{\scriptsize$1$};
\draw (5,1) node{\scriptsize$1$};
\end{scope}


\foreach \x in {0,2,4} {
\begin{scope}[shift={(\x*7,40)},scale=0.8]
\draw[thick] (0,0) -- (0,4)-- (0,4) to [controls=+(45:1) and +(135:1)] (3,4) -- (3,4) to [controls=+(45:1) and +(135:1)] (6,4) -- (6,4) -- (6,0) -- (6,0) to [controls=+(135:0.7) and +(45:0.7)] (4,0) -- (4,0) to [controls=+(135:0.7) and +(45:0.7)] (2,0) -- (2,0) to [controls=+(135:0.7) and +(45:0.7)] (0,0) -- cycle ;
\end{scope}
}

\begin{scope}[shift={(0,40)},scale=0.8]
\draw (1.5,5) node{\scriptsize$1$};
\draw (4.5,5) node{\scriptsize$1$};
\draw (-1,2) node{\scriptsize$0$};
\draw (7,2) node{\scriptsize$\frac{2}{3}$};
\draw (3,-1) node{\scriptsize\textbf{(3 tiles)}};
\draw (1,1) node{\scriptsize$1$};
\draw (3,1) node{\scriptsize$1$};
\draw (5,1) node{\scriptsize$2$};
\end{scope}

\begin{scope}[shift={(14,40)},scale=0.8]
\draw (1.5,5) node{\scriptsize$1$};
\draw (4.5,5) node{\scriptsize$1$};
\draw (-1,2) node{\scriptsize$c$};
\draw (8,2) node{\scriptsize$c+\frac{1}{3}$};
\draw (3,-1) node{\scriptsize$c\in\{0,\frac{1}{3}\}$};
\draw (3,-2.5) node{\scriptsize\textbf{(6 tiles)}};
\draw (1,1) node{\scriptsize$1$};
\draw (3,1) node{\scriptsize$2$};
\draw (5,1) node{\scriptsize$2$};
\end{scope}

\begin{scope}[shift={(28,40)},scale=0.8]
\draw (1.5,5) node{\scriptsize$1$};
\draw (4.5,5) node{\scriptsize$1$};
\draw (-1,2) node{\scriptsize$c$};
\draw (7,2) node{\scriptsize$c$};
\draw (3,-1) node{\scriptsize$c\in\{0,\frac{1}{3},\frac{2}{3}\}$};
\draw (3,-2.5) node{\scriptsize\textbf{(3 tiles)}};
\draw (1,1) node{\scriptsize$2$};
\draw (3,1) node{\scriptsize$2$};
\draw (5,1) node{\scriptsize$2$};
\end{scope}

\foreach \x in {0,2,4} {
\begin{scope}[shift={(\x*7,48)},scale=0.8]
\draw[thick] (0,0) -- (0,4)-- (0,4) to [controls=+(45:1) and +(135:1)] (3,4) -- (3,4) to [controls=+(45:1) and +(135:1)] (6,4) -- (6,4) -- (6,0) -- (6,0) to [controls=+(135:0.7) and +(45:0.7)] (4,0) -- (4,0) to [controls=+(135:0.7) and +(45:0.7)] (2,0) -- (2,0) to [controls=+(135:0.7) and +(45:0.7)] (0,0) -- cycle ;
\end{scope}
}

\begin{scope}[shift={(0,48)},scale=0.8]
\draw (1.5,5) node{\scriptsize$0$};
\draw (4.5,5) node{\scriptsize$1$};
\draw (-1,2) node{\scriptsize$c$};
\draw (7,2) node{\scriptsize$c$};
\draw (3,-1) node{\scriptsize$c\in\{0,\frac{1}{3},\frac{2}{3}\}$};
\draw (3,-2.5) node{\scriptsize\textbf{(6 tiles)}};
\draw (1,1) node{\scriptsize$1$};
\draw (3,1) node{\scriptsize$1$};
\draw (5,1) node{\scriptsize$1$};
\end{scope}

\begin{scope}[shift={(14,48)},scale=0.8]
\draw (1.5,5) node{\scriptsize$0$};
\draw (4.5,5) node{\scriptsize$1$};
\draw (-1,2) node{\scriptsize$c$};
\draw (8,2) node{\scriptsize$c-\frac{1}{3}$};
\draw (3,-1) node{\scriptsize$c\in\{\frac{1}{3},\frac{2}{3}\}$};
\draw (3,-2.5) node{\scriptsize\textbf{(12 tiles)}};
\draw (1,1) node{\scriptsize$1$};
\draw (3,1) node{\scriptsize$1$};
\draw (5,1) node{\scriptsize$2$};
\end{scope}

\begin{scope}[shift={(28,48)},scale=0.8]
\draw (1.5,5) node{\scriptsize$0$};
\draw (4.5,5) node{\scriptsize$1$};
\draw (-1,2) node{\scriptsize$\frac{2}{3}$};
\draw (7,2) node{\scriptsize$0$};
\draw (3,-1) node{\scriptsize\textbf{(6 tiles)}};
\draw (1,1) node{\scriptsize$1$};
\draw (3,1) node{\scriptsize$2$};
\draw (5,1) node{\scriptsize$2$};
\end{scope}

\end{tikzpicture}

\caption{The tile set $\tau$, defined on the group $\BS{3}{2}$, that implements the dynamical system $T$. The whole set of 46 tiles can be easily obtained by permutations of the values for $x_1$, $x_2$ and for $y_1$,$y_2$, $y_3$ and by choosing appropriate values for the carries $c$. The tiles corresponding to multiplication by $2$ are pictured in white, those to multiplication by $\frac{2}{3}$ from a real number $x\in]1;\frac{3}{2}]$ in dashed grey, and from a real number $x\in ]\frac{3}{2};2]$ in grey.}
\label{figure.strongly_aperiodic_tiles_set}
\end{bigcenter}
\end{figure}
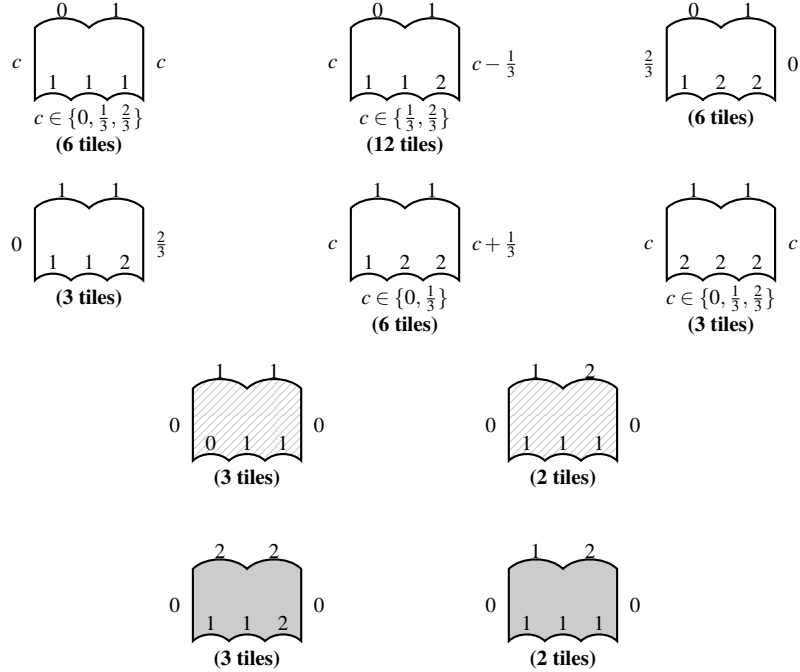

A similar tile set can be constructed on $\BS{m}{n}$ for every $m,n>0$. The dynamical system $T$ presented in Section~\ref{subsection.explicit_tileset} can be used for all the groups $\BS{m}{n}$ -- we have chose to present an explicit tile set on $\BS{3}{2}$ only because it simplifies pictures and gives a reasonable size for $\tau$, and it has no connection with the mutually primeness of $2$ and $3$ which makes the dynamical system $T$ aperiodic -- and arguments for proving the weak aperiodicity of the tile set remain the same.

\begin{theorem}
 There exist weakly aperiodic tile sets on $\BS{m}{n}$ for every $m,n>0$.
\end{theorem}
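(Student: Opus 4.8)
The plan is to reproduce the explicit $\BS{3}{2}$ construction of Sections~\ref{subsection.ds_tilings} and~\ref{subsection.explicit_tileset} for arbitrary $m,n>0$, adapting only the finitely many places where the numbers $3$ and $2$ intervened. Recall that weak aperiodicity asks for a tile set that admits a valid tiling but no strongly periodic one, so it suffices to produce, for each $\BS{m}{n}$, a finite tile set $\tau_{m,n}$ together with the analogue of Proposition~\ref{proposition.weakly_aperiodic} forbidding strongly periodic tilings and the analogue of the explicit valid coloring $\gamma$. The dynamical system $T$ is kept unchanged: its multipliers $2$ and $\frac{2}{3}$ and the absence of periodic points (Proposition~\ref{proposition.T_aperiodic}) depend only on $2$ and $3$ being mutually prime, not on $m$ and $n$, so $T$ can drive the construction on every $\BS{m}{n}$.

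First I would set up $\tau_{m,n}$. A basic pattern in $\Gamma_{m,n}$ carries $m$ bottom labels $y_1,\dots,y_m$ on $g\to ga\to\dots\to ga^m$, $n$ top labels $x_1,\dots,x_n$ on $gb\to\dots\to gba^n$, and two vertical labels $c,d$; the relation (\ref{equation1}) generalises to
$$q\,\frac{x_1+\dots+x_n}{n}+c=\frac{y_1+\dots+y_m}{m}+d,$$
while the alphabet constraints (\ref{condition1}) and (\ref{condition2}) are left untouched. Averaging this relation over a row of $k$ consecutive tiles, now comparing the $nk$ top edges with the $mk$ bottom edges, again shows that a level representing $x$ has a successor representing $qx=T(x)$, so that the proof of Proposition~\ref{proposition.weakly_aperiodic} transfers verbatim: a finite-index period subgroup would make every level periodic through the conjugates $\{ga^kg^{-1}\}$, hence each level would represent a single rational $\frac{c}{p}$ taken from a finite set, producing a periodic point of $T$ and contradicting Proposition~\ref{proposition.T_aperiodic}.

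For the existence of a valid tiling I would rebuild the coloring $\gamma$, replacing every $\frac{3}{2}$ by $\frac{m}{n}$ in (\ref{equation.ga}) and (\ref{equation.gb}) and the averaging coefficients $\frac{1}{3},\frac{1}{2}$ by $\frac{1}{m},\frac{1}{n}$. Since $\Phi_{m,n}$ is well defined (Lemma~\ref{lemma.phi} and Proposition~\ref{proposition.Phi_well_defined}), these formulas still define an edge coloring depending only on the projected coordinates; the horizontal labels are consecutive differences of floors of an arithmetic progression, hence two-valued, and the ranges of the two branches of $T$ place them in $\{0,1\}$ or $\{1,2\}$ exactly as required by (\ref{condition1}) and (\ref{condition2}). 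The same telescoping of (\ref{equation.gaj}) over the $m$ bottom and $n$ top edges shows that every observed pattern multiplies by its $q$, so $\gamma$ is a valid tiling and the patterns it exhibits form $\tau_{m,n}$.

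The step demanding the most care is verifying that $\gamma$ uses only finitely many colors, that is, that the generalised vertical label $t$ of (\ref{equation.gb}) stays bounded and rational with denominator $m$. Deleting the floor functions makes the main terms cancel exactly, and the floor errors then yield the analogue of the $\BS{3}{2}$ estimate $-\frac{1}{3}<t<\frac{q}{2}$, namely $-\frac{1}{m}<t<\frac{q}{n}$, together with $t\in\frac{1}{m}\Z$; hence $t$ ranges over the finite set $\frac{1}{m}\Z\cap\left(-\frac{1}{m},\frac{q}{n}\right)$. Once this finiteness is confirmed, $\tau_{m,n}$ is a genuine finite tile set admitting the valid tiling $\gamma$ but, by the generalised Proposition~\ref{proposition.weakly_aperiodic}, no strongly periodic tiling, which is precisely weak aperiodicity. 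As in Proposition~\ref{proposition.not_strongly_aperiodic}, exhibiting an infinite-order element of $\BS{m}{n}$ projecting to the origin under $\Phi_{m,n}$ --- an analogue of Example~\ref{example.noninjective} --- would further show that $\tau_{m,n}$ is not strongly aperiodic, though this is not needed for the theorem.
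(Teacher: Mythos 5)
Your proposal follows the paper's own route exactly: the paper proves this theorem precisely by observing that the explicit $\BS{3}{2}$ construction generalizes to $\BS{m}{n}$ with $m$ bottom edges and $n$ top edges, the averaging relation $q\,\frac{x_1+\cdots+x_n}{n}+c=\frac{y_1+\cdots+y_m}{m}+d$, the unchanged dynamical system $T$ (whose aperiodicity depends only on $2$ and $3$ being coprime, not on $m,n$), and the same two arguments for the non-existence of strongly periodic tilings and the existence of the valid coloring $\gamma$. The only inaccuracy is your claim that the generalized vertical label lies in $\frac{1}{m}\Z$: since the coefficient $\frac{q}{n}$ need not be a multiple of $\frac{1}{m}$, the label in general lies in $\frac{q}{n}\Z+\frac{1}{m}\Z$ (for instance in $\frac{1}{6}\Z$ on $\BS{2}{3}$ with $q=2$), but this is still a fixed common denominator and the label is bounded, so the set of colors remains finite and your argument is otherwise unaffected.
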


\section{Undecidability of the domino problem on $\BS{m}{n}$}
\label{section.dp_undecidable}

We adapt the proof of undecidability of the domino problem in the hyperbolic plane presented in~\cite{DBLP:conf/mcu/Kari07}.
The proof is by reduction of the mortality problem of piecewise affine maps, which is known to be undecidable.


Given a finite number of rational affine transformations $f_1,f_2,\dots,f_n$ of $\R^2$ and disjoint closed unit squares $U_1,U_2,\dots,U_n$ that serves as domains for the corresponding affine map, we define a partial function $f:\R^2\rightarrow\R^2$ whose domain is $U=U_1\cup U_2 \cup\dots\cup U_n$ and which is defined by
$$\overrightarrow{x}\mapsto f_i(\overrightarrow{x})\text{ for
}\overrightarrow{x}\in U_i.$$
A vector $\overrightarrow{x}\in\R^2$ is \define{immortal} if for every $i\in\Z$, the value $f^i(\overrightarrow{x})$ is always in the domain $U$. The \define{mortality problem of piecewise affine maps} asks whether a given system of rational affine transformations of the plane $f_1,f_2,\dots,f_n$ and disjoint unit squares $U_1,U_2,\dots,U_n$ with integer corners has an immortal point. It is known, using the undecidability of the Turing machine mortality problem~\cite{Hooper1966}, that the mortality problem of piecewise affine maps is undecidable~\cite{DBLP:conf/mcu/Kari07}.


Let $f_i:\R^2\rightarrow\R^2$ be a rational affine transformation of the Euclidean plane with domain $U_i$, given by
$$f_i(\overrightarrow{x})=M_i\overrightarrow{x}+\overrightarrow{b_i}.$$
Following~\cite{DBLP:conf/mcu/Kari07}, one can adapt the tiles presented in Section~\ref{subsection.ds_tilings} to construct a finite tile set that computes function $f_i$ and whose top edges labels represent a vector in $U_i$.


Given a piecewise affine map $f$ with domain $U$, one can thus effectively construct a finite tile set $\tau$ on $\BS{m}{n}$ such that $\tau$ admits a valid tiling if and only if $f$ has an immortal point. From the undecidability of the mortality
problem for piecewise affine maps~\cite{DBLP:conf/mcu/Kari07} we deduce

\begin{theorem}
 The domino problem is undecidable on Baumslag-Solitar groups.
\end{theorem}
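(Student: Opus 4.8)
The plan is to prove undecidability by reduction from the mortality problem of piecewise affine maps, which is known to be undecidable. Fix such an instance: rational affine transformations $f_i(\overrightarrow{x}) = M_i\overrightarrow{x} + \overrightarrow{b_i}$ on disjoint closed unit squares $U_i$ with integer corners, and write $U = \bigcup_i U_i$. The goal is to build, effectively from this data, a finite tile set $\tau$ on $\BS{m}{n}$ whose valid tilings correspond exactly to the bi-infinite orbits of $f$ that stay in $U$ at every step, i.e. to the immortal points. Since the height $\parallel w \parallel_{b^{-1}}$ of a level ranges over all of $\Z$, a valid tiling indeed induces a two-sided orbit. First I would upgrade the encoding of Section~\ref{subsection.representations} from a single real number to a vector in $\R^2$ by carrying two tracks of labels, so that each level of $\Gamma_{m,n}$ represents a vector $\overrightarrow{x}\in\R^2$ through a pair of balanced representations.

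Next, following Section~\ref{subsection.ds_tilings} and~\cite{DBLP:conf/mcu/Kari07}, I would design the tiles so that passing to a successor level decreases the height by one and performs one application of $f$. The averaging identity~(\ref{equation1}), now imposed on each of the two tracks, forces the successor level to represent $M_i\overrightarrow{x}+\overrightarrow{b_i}$ whenever the tiles on that level implement $f_i$: multiplication by the rational entries of $M_i$ is realized by the carry mechanism on the vertical edges, the coupling between coordinates coming from the off-diagonal entries is threaded through the same carries, and the translation $\overrightarrow{b_i}$ contributes a constant offset. This yields a finite set of tiles computing $f_i$ whose top-edge labels represent vectors lying in $U_i$.

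The decisive ingredient is branch selection: each level must use tiles implementing a single $f_i$, and this choice must match the vector actually represented on that level lying in $U_i$. Because the squares $U_i$ have integer corners, membership is detected from the integer parts of the two coordinates, which are visible in the finite label alphabet, and I would encode the index $i$ into the vertical-edge colors so that the local matching rules compel a level to apply $f_i$ precisely when its represented vector lies in $U_i$. A valid tiling then carries a bi-infinite sequence $(\overrightarrow{x}_k)_{k\in\Z}$ with $\overrightarrow{x}_{k+1}=f(\overrightarrow{x}_k)$ and every $\overrightarrow{x}_k\in U$. The equivalence now splits in two directions. For the forward direction I read off, by the averaging argument of Section~\ref{subsection.ds_tilings}, the vector on each level and obtain a full orbit of $f$ inside $U$, hence an immortal point. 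For the converse I build, from an immortal orbit, an explicit coloring $\gamma$ exactly as in Section~\ref{subsection.ds_tilings}, placing at height $\beta$ the balanced representations of $\overrightarrow{x}_\beta$, and verify the local constraints at every vertex; taking the observed patterns as $\tau$ shows it admits a valid tiling. As $\tau$ is produced effectively from $f$, any decision procedure for the domino problem on $\BS{m}{n}$ would decide mortality, which is impossible.

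I expect the main obstacle to be exactly the branch-selection step, namely guaranteeing by purely local rules that no valid tiling can apply $f_i$ on a level whose represented vector lies outside $U_i$. This is subtle because a single bi-infinite label sequence can, through different families of averaging intervals, represent several distinct reals, as noted after the definition of representations; at the boundaries between the squares $U_i$ the encoding must therefore pin down domain membership unambiguously. The disjointness of the closed squares together with the integer-corner condition is what makes this possible, but reconciling it with the multivaluedness of balanced representations is the point that will require the most care.
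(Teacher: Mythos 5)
Your proposal takes essentially the same route as the paper: a reduction from the mortality problem for rational piecewise affine maps, realized by adapting the averaging tiles of Section~\ref{subsection.ds_tilings} to two tracks so that each level represents a vector in $\R^2$, with the branch index carried on vertical edges and the integer-corner unit squares $U_i$ enforcing domain membership through the finite top-label alphabets --- precisely the adaptation of the hyperbolic-plane construction of~\cite{DBLP:conf/mcu/Kari07} that the paper invokes. If anything, your discussion of branch selection and of the multivaluedness of balanced representations is more explicit than the paper's own proof, which delegates those details to the cited construction.
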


As explained in~\cite{DBLP:conf/mcu/Kari07}, it is possible to encode Turing machines into affine maps, and affine maps can be
encoded in a tiling on Baumslag-Solitar groups as discussed above. Combining this two facts, we can conclude that for any Baumslag-Solitar group, it is possible to encode any Turing machine inside a tiling on this group. It was shown in~\cite{Jeandel2012} that there exists a Turing machine $M$ that halts on every recursive configuration but does not halt on
some non-recursive configuration. Executing our construction on $M$ one obtains a tile set on $\BS{m}{n}$ that admits a valid tiling, and all valid tilings are non recursive.

\begin{theorem}
 There exist an arecursive tile set on $\BS{m}{n}$ for every $m,n>0$.
\end{theorem}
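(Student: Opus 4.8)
The plan is to reuse, essentially verbatim, the tile-set construction behind the undecidability of the domino problem on $\BS{m}{n}$, but to feed it a carefully chosen Turing machine rather than an arbitrary one. Recall from the previous section that, given a piecewise affine map $f$ built from a Turing machine, one effectively obtains a finite tile set $\tau$ on $\BS{m}{n}$ whose valid tilings correspond to the immortal points of $f$, i.e.\ to the bi-infinite non-halting space-time diagrams of the machine. The whole argument then rests on plugging into this correspondence the machine $M$ of Jeandel~\cite{Jeandel2012}, which halts on every recursive configuration yet fails to halt on at least one non-recursive configuration.

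First I would recall the encoding. Following~\cite{DBLP:conf/mcu/Kari07}, encode $M$ as a system of rational affine maps $f_1,\dots,f_n$ with unit-square domains, so that the immortal points of the associated piecewise affine map $f$ are exactly the points coding bi-infinite orbits of $M$ along which the head never reaches a halting configuration. Then apply the construction of Section~\ref{section.dp_undecidable} to produce a finite tile set $\tau$ on $\BS{m}{n}$ such that $\tau$ admits a valid tiling if and only if $f$ has an immortal point. By the defining property of $M$, the non-recursive configuration on which $M$ does not halt yields an immortal point of $f$; hence $\tau$ admits at least one valid tiling.

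It remains to argue that no valid tiling is recursive. Suppose, for contradiction, that some valid tiling $t$ of $\BS{m}{n}$ by $\tau$ were recursive. Since the labels of the edges on each level form a balanced representation of the coordinates of the encoded configuration, and since passing from a level to a successor implements one step of $f$, the configuration of $M$ coded on any fixed level can be read off effectively from the edge labels produced by $t$. A recursive $t$ would therefore give a recursive procedure computing this configuration together with its infinite, non-halting forward orbit; in particular the configuration read off one level would be a recursive configuration of $M$ on which $M$ does not halt. This contradicts the choice of $M$, which halts on every recursive configuration. Hence every valid tiling by $\tau$ is non-recursive, and $\tau$ is the desired arecursive tile set.

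The main obstacle is precisely this last decoding step: one must check that \emph{recursive tiling} really does translate into \emph{recursive non-halting configuration} of $M$, despite the several layers of the encoding (the affine map, the balanced representations with their carries, and the branching of levels in $\Gamma_{m,n}$, where each level has several successors and predecessors). Concretely, the delicate points are that the map from a recursive edge-labelling to the encoded integer and rational data is computable and well defined independently of the chosen level and branch, and that the notion of recursive configuration used in~\cite{Jeandel2012} matches the data recovered from a single level; once these are verified the contradiction is immediate.
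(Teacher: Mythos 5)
Your proposal follows exactly the paper's route: run the domino-problem construction (Turing machine $\to$ piecewise affine map $\to$ tile set on $\BS{m}{n}$) on Jeandel's machine $M$, which halts on every recursive configuration but not on some non-recursive one, so the tile set admits a valid tiling while any recursive tiling would decode to a recursive non-halting configuration of $M$, a contradiction. The decoding step you flag as delicate is real but is glossed over in the paper as well, so your write-up is, if anything, more explicit than the original about where the remaining verification lies.
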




\section*{Acknowledgement}

This work has been supported by the Academy of Finland Grant 131558 and by ANR project SubTile.

\nocite{*}
\bibliographystyle{eptcs}
\bibliography{AubrunKari}

\end{document}